\documentclass[11pt]{article}
\usepackage[utf8]{inputenc}
\usepackage[british]{babel}
\usepackage{lmodern}
\usepackage{comment}
\usepackage{amsthm,amssymb,amsmath}
\usepackage{mathtools}
\usepackage{graphicx,algorithmic,algorithm}
\usepackage{enumerate, enumitem}
\usepackage{thmtools, thm-restate}

\usepackage{tikz,tikz-cd}
\usepackage{circuitikz}
\tikzcdset{scale cd/.style={every label/.append style={scale=#1},
    cells={nodes={scale=#1}}}}%
\usetikzlibrary{decorations.markings}
\usetikzlibrary{shapes}
\usepackage[letterpaper, margin=1in]{geometry}
\usepackage[colorlinks=true]{hyperref}
\usepackage[nameinlink,capitalize]{cleveref}
\usepackage{todonotes}
\usepackage{dsfont}
\usepackage[normalem]{ulem}
\hypersetup{
	pdftitle=   {},
	pdfauthor=  {}
}
\usepackage{authblk}
\usepackage{caption}
\usepackage{subcaption}
\usepackage{commath}

\newtheorem{theorem}{Theorem}[section]
\newtheorem{lemma}[theorem]{Lemma}
\newtheorem{proposition}[theorem]{Proposition}

\newtheorem{observation}[theorem]{Observation}
\newtheorem{claim}[theorem]{Claim}
\newtheorem{question}{Question}

\newenvironment{clproof}{\begin{list}{}{%
			\setlength{\leftmargin}{3mm}%
		} \item {\it Proof.} }{\hfill$\lozenge$\end{list}}

\newcommand\extrafootertext[1]{%
    \bgroup
    \renewcommand\thefootnote{\fnsymbol{footnote}}%
    \renewcommand\thempfootnote{\fnsymbol{mpfootnote}}%
    \footnotetext[0]{#1}%
    \egroup
}

\newcommand{\yes}{\textsc{Yes}}

\newcommand{\cw}{\mathsf{cw}}

\newcommand{\depth}{\mathsf{depth}}

\newcommand{\crank}{\mathsf{cr}}

\newcommand{\dpw}{\mathsf{dpw}}

\newcommand{\dcw}{\mathsf{dcw}}
\newcommand{\lab}{\mathsf{lab}}
\newcommand{\DCW}{\mathsf{DCW}}
\newcommand{\op}{\mathsf{op}}

\newcommand{\DP}{\mathsf{DP}}

\renewcommand\abs[1]{\left\lvert #1\right\rvert}

\begin{document}

	\title{An FPT algorithm for cycle rank on semi-complete digraphs}
	
	\author[1,2]{Seokbeom Kim}
	\author[2,3]{O-joung Kwon}
	\author[3]{Myounghwan Lee}

    \affil[1]{Department of Mathematical Sciences, KAIST, Daejeon, South Korea}
	\affil[2]{Discrete Mathematics Group, Institute for Basic Science (IBS), Daejeon, South Korea}
	\affil[3]{Department of Mathematics, Hanyang University, Seoul, South Korea.}
	
	\date\today
	\maketitle

	\extrafootertext{S.~Kim and O. Kwon are supported by the Institute for Basic Science (IBS-R029-C1).
    O.~Kwon and M.~Lee are supported by the National Research Foundation of Korea (NRF) grant funded by the Ministry of Science and ICT (No. RS-2023-00211670).
    }

    \extrafootertext{E-mail addresses: 
  \texttt{seokbeom@kaist.ac.kr} (S.~Kim),
  \texttt{ojoungkwon@hanyang.ac.kr} (O.~Kwon), and \texttt{sycuel@hanyang.ac.kr} (M.~Lee).}

\begin{abstract}
Cycle rank is a depth parameter for digraphs introduced by Eggan in 1963. Gruber (DMTCS 2012) and Giannopoulou, Hunter, and Thilikos (DAM 2012) asked whether the problem of determining if a given digraph has cycle rank at most $w$ is fixed-parameter tractable parameterized by $w$. 
We provide such algorithms for semi-complete digraphs, and for digraphs of bounded directed clique-width. Specifically, we show that given an $n$-vertex semi-complete digraph~$G$ and an integer $w$, one can in time $\mathcal{O}(9^{(w+1)4^{w+2}} \cdot n^2)$ determine whether $G$ has cycle rank at most~$w$. The proof is reduced to the case of bounded directed clique-width, and we then show that given an $n$-vertex digraph $G$ with a directed clique-width $k$-expression and an integer $w$, one can in time $\mathcal{O}(9^{(w+1) 4^k} \cdot n)$ determine whether $G$ has cycle rank at most $w$.
Additionally, we consider the \textsc{Minimum Feedback Arc Set} problem on semi-complete digraphs, and show that it can be solved in time $n^{\mathcal{O}(w)}$, where $w$ is the cycle rank of the given semi-complete digraph.
\end{abstract}

\section{Introduction}

While undirected graph width parameters such as treewidth~\cite{RobertsonSeymour86twalgorithm}, tree-depth~\cite{NesetrilO2006}, and clique-width~\cite{Courcelle2000lineartimesolvable, Courcelle00upperboundcw} have become standard tools, the situation for digraphs is considerably more subtle. Numerous directed width parameters have been proposed, including directed treewidth~\cite{JohnsonRST2001}, DAG-width~\cite{BerwangerDHK2006DAG, obdrvzalek06DAGwidth}, and Kelly-width~\cite{HUNTER2008206}; see a survey of Kreutzer and Kwon~\cite{Kreutzer2018} on digraph width parameters.
Ganian, Hlin\v{e}n\'{y}, Kneis, Meister, Obdr\v{z}\'{a}lek, Rossmanith, and Sikdar~\cite{GanianHKMORS2016} discussed that there are inherent obstacles to obtaining a directed analogue that simultaneously enjoys the full range of desirable algorithmic and structural properties. Consequently, different width parameters have been developed for different purposes.

Within this landscape, the notion of \emph{cycle rank} stands out as an interesting parameter, which roughly measures the complexity of dismantling all strongly connected components of a digraph (see~\Cref{subsection:cyclerank} for the formal definition). 
Originally, Eggan~\cite{Eggan1963} introduced cycle rank in the course of studying the star-height of regular expressions in formal language theory.
Since it is reminiscent of the \emph{tree-depth} of undirected graphs~\cite{NesetrilO2006}, cycle rank has attracted attention from both structural and algorithmic perspectives.

From a structural point of view, Giannopoulou, Hunter, and Thilikos~\cite{GIANNOPOULOU2012searchinggame} characterized digraphs of bounded cycle rank by providing obstructions called \emph{LIFO-haven} and \emph{directed shelter}. Furthermore, Hatzel, Kwon, Lee, and Wiederrecht~\cite{hatzel25+unavoidable} recently provided three types of unavoidable butterfly-minors for digraphs of large cycle rank.
For the algorithmic side, Ganian, Hlin\v{e}n\'{y}, Kneis, Langer, Obdr\v{z}\'{a}lek, and Rossmanith~\cite{Ganian2014} summarized previously known results and established new findings regarding the parameterized complexity of several \textsf{NP}-hard problems on digraphs when parameterized by cycle rank.

Despite considerable progress in the study of cycle rank, its computational intractability remains a significant obstacle.
Gruber~\cite{Gruber2012} showed that computing the cycle rank of a digraph is \textsf{NP}-hard, even when the input is restricted to digraphs of maximum out-degree at most~$2$.
As a mitigating result, Giannopoulou, Hunter, and Thilikos~\cite{GIANNOPOULOU2012searchinggame} provided an $\mathcal{O}(n^k)$-time algorithm that decides if an $n$-vertex digraph has cycle rank at most $k$.
Based on these results, it is natural to ask whether there is a fixed-parameter tractable algorithm for deciding whether the cycle rank of a digraph is at most $k$, that is, an algorithm running in time $f(k) \cdot n^{\mathcal{O}(1)}$ for some function $f$. 
This question was posed by Gruber~\cite{Gruber2012} and by Giannopoulou, Hunter, and Thilikos~\cite{GIANNOPOULOU2012searchinggame}. 

In parameterized complexity, an instance of a parameterized problem consists in a pair $(x,k)$, where $k$ is a secondary measurement, called the parameter.
A parameterized problem $Q \subseteq \Sigma^* \times \mathbb{N}$ is \emph{fixed-parameter tractable (FPT)} if there is an algorithm which decides whether $(x,k)$ belongs to $Q$ in time $f(k)\cdot \abs{x}^{\mathcal{O}(1)}$ for some computable function $f$.
Such an algorithm is called a \emph{fixed-parameter tractable algorithm (FPT algorithm)}.
A parameterized problem $Q \subseteq \Sigma^* \times \mathbb{N}$ is \emph{slicewise polynomial (XP)} if there is an algorithm which decides whether $(x,k)$ belongs to $Q$ in time $\abs{x}^{f(k)}$ for some computable function $f$.
Such an algorithm is called a \emph{slicewise polynomial algorithm (XP algorithm)}.

\paragraph{Our contribution.}

We focus on the decision problem.
\vskip 0.3cm
\noindent
\fbox{\parbox{0.97\textwidth}{
	\textsc{Cycle Rank}\\
	\textbf{Input:} A digraph $G$ and a nonnegative integer $w$\\
	\textbf{Task:} Determine whether $G$ has cycle rank at most $w$.
    }}
\vskip 0.3cm

In this paper, we consider the \textsc{Cycle Rank} problem on \emph{semi-complete} digraphs.
Semi-complete digraphs are digraphs obtained from complete undirected graphs by replacing every edge with one directed edge or bi-directed edges. The class of semi-complete digraphs is one of the most extensively studied classes of digraphs and admits many deep structural and algorithmic results; see a survey of Bang-Jensen and Havet~\cite{BangJensen2018}. 
For further results on the width parameters for semi-complete digraphs, we refer to~\cite{FominP2013,fominpilipczuk19dpw,FradkinS2013,GurskiKRW2021,Pilipczuk2013}.

Our main result is a fixed-parameter tractable algorithm for semi-complete digraphs. We denote by $\crank(G)$ the cycle rank of a digraph $G$.

\begin{restatable}{theorem}{mainonethm}\label{thm:main1}
There is an algorithm that, given a semi-complete digraph $G$ on $n$ vertices and an integer $w \geq 0$, decides if $\crank(G)\le w$ in time~$\mathcal{O}(9^{(w+1)4^{w+2}} \cdot n^2)$.
Moreover, one can find a cycle rank decomposition of $G$ of depth at most $w$ in the same time if $\crank(G)\le w$.
\end{restatable}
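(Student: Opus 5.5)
We reduce \Cref{thm:main1} to the bounded directed clique-width algorithm announced in the abstract: given an $n$-vertex digraph with a directed clique-width $k$-expression and an integer $w$, one decides $\crank(G)\le w$ in time $\mathcal{O}(9^{(w+1)4^k}\cdot n)$. Matching exponents, the reduction should produce a $(w+2)$-expression. The plan has three steps: bound directed pathwidth by cycle rank, compute a path decomposition, and convert it into a clique-width expression.

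\textbf{Step 1: $\dpw(G)\le \crank(G)$.} We first show that every digraph satisfies $\dpw(G)\le\crank(G)$; on semi-complete digraphs this is what makes $\dpw$ small. Fix a cycle rank decomposition of depth $w$. We build a directed path decomposition by induction on $w$.
\begin{itemize}
\item If $G$ is not strongly connected, order its strong components topologically and concatenate their path decompositions. This concatenation is valid because every arc between components goes forward.
\item If $G$ is strongly connected, it has a vertex $v$ with $\crank(G-v)\le w-1$. Add $v$ to every bag of a decomposition of $G-v$.
\end{itemize}
This yields width at most $w$ in the standard sense. As a sanity check, an acyclic digraph has cycle rank $0$ and directed pathwidth $0$.

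\textbf{Step 2: algorithm.} If $\dpw(G)>w$, we reject immediately. Otherwise we use Fomin–Pilipczuk~\cite{fominpilipczuk19dpw}, which computes an optimal directed path decomposition of a semi-complete digraph in time $2^{\mathcal{O}(\sqrt{k\log k})}n^2$. This dominates the quadratic term. Alternatively, any FPT $n^2$ algorithm giving width $\le w$ or a certificate that $\dpw(G)>w$ would do.

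\textbf{Step 3: path decomposition to clique-width expression.} In a semi-complete digraph with a directed path decomposition $(W_1,\dots,W_r)$ of width $w$, we introduce vertices in the order in which they first appear.
\begin{itemize}
\item Every vertex already forgotten (left all bags) receives a dead label. Every vertex in the current bag keeps a private label. That gives $w+1$ labels plus one dead label, so $w+2$ in total.
\item The key property is that arcs between a forgotten vertex $u$ and a later vertex $x$ go only from $x$ to $u$, i.e. backward. In a semi-complete digraph this means such an arc exists, and exactly in that direction.
\item Hence, when $x$ is introduced, we add all arcs from $x$ to the dead class, and arcs to and from the current bag vertices individually by their private labels.
\item When a vertex is forgotten, we relabel it to dead.
\end{itemize}
The bijection between bag positions and labels needs care. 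In particular, a new vertex must get a fresh label before the forgotten ones are merged, and this may need the $+2$ slack.

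\textbf{Main obstacle.} The main obstacle is Step 1's exact bound and the label count in Step 3, so that $k=w+2$ matches the stated exponent. If the count comes out as $w+3$, we would reorder the operations (forget before introduce) to save a label. Finally, the decomposition output follows because the clique-width dynamic programming is constructive via backtracking.
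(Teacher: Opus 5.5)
Your plan matches the paper's proof: reject when $\dpw(G)>w$ using $\dpw\le\crank$, compute a path decomposition of width at most $w$ with Fomin--Pilipczuk, convert it to a directed $(w+2)$-expression, and run the clique-width dynamic programming. The paper cites Gruber's Lemma~10 and Fomin--Pilipczuk's Lemma~A.2 for the two middle steps, where you reprove them: your induction is correct, and your one-dead-label construction uses exactly $w+2$ labels and $\mathcal{O}(wn)$ nodes once forgotten vertices are relabelled before each introduction. You should also note that $\abs{E(G)}=\mathcal{O}(n^2)$, which is why building the decomposition stays within the stated time.

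Two harmless slips. First, the result the paper uses (Theorem~4.13 of Fomin--Pilipczuk) runs in $2^{\mathcal{O}(w\log w)}\cdot n^2$, not $2^{\mathcal{O}(\sqrt{k\log k})}\cdot n^2$. Second, under the paper's convention, which is also what your Step~1 produces, arcs between a forgotten vertex and a later one go forward, so at each introduction you add all arcs from the dead class to the new vertex; Step~3 states the reverse direction, and either convention works, but Steps~1 and~3 must use the same one.
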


Running the algorithm in \Cref{thm:main1} from $w=0$ to $w=\crank(G)+1$, we can derive that 
the cycle rank of a semi-complete digraph $G$ can be computed exactly in time $\mathcal{O}(9^{(\crank(G)+2)4^{\crank(G)+3}} \cdot n^2)$.

We explain the basic strategy for~\cref{thm:main1}.
Gruber~\cite{Gruber2012} showed that the directed path-width of a digraph is bounded above by its cycle rank. Moreover, Fomin and Pilipczuk~\cite{fominpilipczuk19dpw} obtained a fixed-parameter tractable algorithm for testing whether a given semi-complete digraph has directed path-width at most $w$, and, if so, produces a corresponding decomposition.
Furthermore, Fomin and Pilipczuk~\cite{FominP2013} showed that given a directed path-decomposition of a semi-complete digraph of width $w$, one can obtain a directed (clique-width) $(w+2)$-expression in time $\mathcal{O}(n^2)$. Combining these results, we may assume that a given semi-complete digraph is equipped with a directed $(w+2)$-expression.

Motivated by this, we design a fixed-parameter tractable algorithm for general digraphs of bounded directed clique-width.
As discussed, \cref{thm:main} implies \cref{thm:main1}.

\begin{restatable}{theorem}{mainthm}\label{thm:main}
    Let $k \geq 1$ be an integer.
    There is an algorithm that, given an $n$-vertex digraph $G$, a directed $k$-expression of $G$, and an integer $w \geq 0$, decides if $\crank(G)\le w$ in time~$\mathcal{O}(9^{(w+1)4^k}\cdot n)$.
    Moreover, one can find a cycle rank decomposition of $G$ of depth at most $w$ in time~$\mathcal{O}(9^{(w+1)4^k}\cdot (n+m))$ if $\crank(G)\le w$, where $m\coloneqq \abs{E(G)}$.
\end{restatable}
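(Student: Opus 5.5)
We will prove \cref{thm:main} by dynamic programming bottom-up over the parse tree of the given directed $k$-expression. The starting point is the recursive definition of cycle rank: $\crank(G)=0$ iff $G$ is acyclic; if $G$ is not strongly connected, then $\crank(G)$ is the maximum over its strong components; and if $G$ is strongly connected, then $\crank(G)=1+\min_v \crank(G-v)$. A cycle rank decomposition of depth at most $w$ can therefore be encoded by a labelling $\lambda\colon V(G)\to\{0,1,\dots,w\}$, meaning ``the vertex is deleted at level $\lambda(v)$'', with vertices that are never deleted receiving a default level. The condition to check is: for every $i$, every strong component of the subgraph induced on vertices of level $\ge i$ contains at most one vertex of level exactly $i$. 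The final level must induce an acyclic digraph. Our first step is to state and prove this characterization (each strong component at level $i$ removes exactly one vertex), so that the existence of a decomposition becomes the existence of a labelling with a condition on strong components of the nested subgraphs $G_{\ge i}$.

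Next we design partial solutions for a node $t$ of the expression, with subgraph $G_t$ whose vertices carry labels in $[k]$. For each level $i\in\{0,\dots,w\}$ we summarize the strong-component structure of $G_t[\lambda^{-1}(\ge i)]$ only as far as it matters for future edges. Future edges attach uniformly to whole label classes, so a vertex's future role depends only on its label. We therefore record, for each level $i$, a reachability relation among the label classes present at level $\ge i$: which label $a$ reaches which label $b$ inside $G_t[\ge i]$. We also record, for each label $a$, one bit per level saying whether some already-formed strongly connected piece that reaches/is reached by class $a$ already contains a level-$i$ vertex. It is cleaner to phrase this as a relation on pairs of labels with a flag from $\{$none, one-level-$i$-vertex$\}$, i.e.\ a type in a set of size roughly $3^{4^k}$ per level. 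That gives about $9^{4^k}$ per level and $9^{(w+1)4^k}$ overall, matching the target bound. Condensing strong components whose future can no longer change (no label reaches into them from both sides) lets us verify the ``at most one level-$i$ vertex'' condition and then forget them.

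Then we give update rules for the four operations. Creating a vertex enumerates its level. Disjoint union combines the two tables pointwise by union. Relabelling $a\to b$ merges rows and columns. Adding the arcs $a\to b$ adds reachability from everything reaching $a$ to everything reached from $b$ at every level $i$ where both classes are present, via transitive closure on the $k$-label graph. In doing so it merges flags and rejects the state whenever two level-$i$ vertices become strongly connected in $G_{\ge i}$. Correctness is an induction showing that the recorded type is exactly an invariant of $(G_t,\lambda|_{G_t})$ that is preserved under all operations and determines validity at the root. Each node processes $O(9^{(w+1)4^k})$ states in time polynomial in $k$ absorbed into the bound, and the expression has $O(n)$ nodes, which gives the claimed running time. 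To find a decomposition we keep back-pointers, recover $\lambda$, and then build the decomposition in $O(n+m)$ per level by computing strong components.

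The main obstacle is the state design. Strong connectivity in $G_{\ge i}$ can later be created through paths that leave and re-enter a label class, and a label class may contain several distinct components, some of which already hold a level-$i$ vertex and some of which do not. We first need to show that two components with identical label-reachability profiles behave identically in all future extensions, so that flags can be stored per profile rather than per component. Only then does the bounded $4^k$-type state suffice. Proving this ``profile equivalence'' lemma rigorously is the step we expect to demand the most care.
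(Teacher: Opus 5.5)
\textbf{There is a genuine gap, and it is in the state.} Your level labelling is the paper's CW-ranking in disguise (take $h=w-\lambda$), and recovering the decomposition by back-pointers and strong components matches the paper. The problem is the flag. You use a flag with values in $\{$none, one level-$i$ vertex$\}$ per profile, and the $\oplus$ step merges flags ``pointwise by union''. With that state the algorithm is wrong.

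Your profile-equivalence observation is true, but its consequence is not that same-profile components can share one flag. Rather, all strong components of $G_t[\ge i]$ with the same current in-label set $A$ and out-label set $B$ fall into a single strong component of $G_{\ge i}$ as soon as some $\alpha_{a,b}$ with $a\in B$, $b\in A$ is applied. To see this, route from each one to any label-$a$ vertex, across a new arc to any label-$b$ vertex, and into the next. So you must know whether $0$, $1$, or at least $2$ of them carry a level-$i$ vertex.

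Concretely, take the bidirected $4$-cycle $\alpha_{2,1}\alpha_{1,2}(1(x)\oplus 1(y)\oplus 2(z_1)\oplus 2(z_2))$ with $w=1$, $\lambda(x)=\lambda(y)=0$ and $\lambda(z_1)=\lambda(z_2)=1$. Run your DP on it:
\begin{enumerate}
\item After the unions, $x$ and $y$ are two level-$0$ singletons with profile $(\{1\},\{1\})$, recorded as a single flag.
\item After $\alpha_{1,2}$, their common profile is $(\{1\},\{1,2\})$, while $z_1,z_2$ keep $(\{2\},\{2\})$ at level $1$.
\item At $\alpha_{2,1}$, your test sees one level-$0$ vertex reached from label $1$ and reaching label $2$, and accepts.
\end{enumerate}
Yet $xz_1yz_1x$ is a closed walk through two level-$0$ vertices, and the bidirected $4$-cycle has cycle rank $2$. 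So your DP would report $\crank\le 1$ wrongly.

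Rejecting whenever two level-$i$ vertices share a profile fails in the other direction already on $1(x)\oplus 1(y)$ with $w=0$. The paper's fix is to store $M[A,B;m]\in\{0,1,2\}$, the number of rank-$m$ vertices with in-label set $A$ and out-label set $B$, capped at $2$. Then $\oplus$ adds counts capped at $2$, and $\alpha_{a,b}$ rejects iff $\sum_{b\in A,\,a\in B}M'[A,B;m]\ge 2$ for some $m$. Your figure $3^{4^k}$ per level is right only if the third value means ``at least two''.

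Two further points.

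First, if your ``relation on pairs of labels'' is meant literally, with entries indexed by $[k]\times[k]$, it breaks under relabelling. Compare a single level-$i$ vertex lying on a path from label $a$ to label $b$ and on one from $a$ to $c$ with two distinct vertices, one on each path. After $\rho_{c\to b}$ the label-pair records of these two situations coincide. Only the second makes a later $\alpha_{b,a}$ fatal. You therefore need the full in-label and out-label sets of each level-$i$ vertex, which is where $4^k$ comes from.

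Second, your running-time accounting is off. The state space $\{0,1,2\}^{2^{[k]}\times 2^{[k]}\times[0,w]}$ has size $S=3^{(w+1)4^k}$. The bound $9^{(w+1)4^k}=S^2$ is the cost of the $\oplus$ step, which must combine every pair of child states. A state space of size $9^{(w+1)4^k}$, as you claim, would make that step cost $81^{(w+1)4^k}$.

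Finally, you do not need a statement about all future extensions. It suffices, as in the paper, to show by induction that the profile of $(G_t,h)$ is a function of the children's profiles. Under $\alpha_{a,b}$ the new in-label set of a rank-$m$ vertex is $A'\cup R^\downarrow_a(m)$ if $b\in A'$, and $A'$ otherwise. Here $R^\downarrow_a(m)$ is read off the profile itself, so no separate label-reachability relation is needed.
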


We remark that the property of having cycle rank at most $w$ can be expressed by an MSO$_1$-formula. 
Hence, the existence of a fixed-parameter tractable algorithm parameterized by both $k$ and $w$ follows from the meta-theorem on bounded directed clique-width classes by Courcelle, Makowsky, and Rotics~\cite{Courcelle2000lineartimesolvable}. However, our algorithm is explicit and has a significantly better running time.

Kant\'e and Rao~\cite{KanteR2013} showed that there is a fixed-parameter tractable algorithm for deciding whether the bi-rank-width of a digraph is at most $k$.
Using a relation between bi-rank-width and directed clique-width, one can obtain an algorithm that runs in time $g(k) \cdot n^3$ for some function $g$ and either determines that the directed clique-width of a digraph is more than $k$, or outputs a directed clique-width expression of width at most $2^{2k+2}-1$~\cite[Lemma 9.9.15]{Kreutzer2018}.
Therefore, \cref{thm:main} implies that \textsc{Cycle Rank} is fixed-parameter tractable parameterized by $w$ and directed clique-width.

Steiner and Wiederrecht~\cite{SteinerW2020} proved that \textsc{Cycle Rank} is fixed-parameter tractable parameterized by the \emph{directed modular width}. 
Their algorithm heavily relies on the fact that an optimal cycle rank decomposition can be nicely decomposed along the modules of the digraph.
As this property is quite restrictive, their algorithm does not readily extend to classes of bounded directed clique-width. 
In addition, the semi-complete digraph obtained from a directed path $v_1v_2 \cdots v_k$ $(k\ge 4)$ by adding all edges $v_iv_j$ with $i>j$ has directed clique-width $3$ and directed modular width $k$. Thus, their algorithm does not imply \cref{thm:main} even on semi-complete digraphs. 

A main tool in the proof of~\Cref{thm:main} is an equivalent coloring concept, where every strongly connected subdigraph has a unique vertex with largest color.
This leads to a new ranking parameter, called the \emph{CW-ranking number} (the abbreviation ``CW'' stands for closed walks).
It is motivated by the \emph{vertex ranking number} of undirected graphs, introduced by Iyer, Ratliff, and Vijayan~\cite{IyerRV1988}, which is known to be equivalent to the \emph{tree-depth} of an undirected graph~\cite{NesetrilO2006, NesetrilO2012}. This is also closely related to the \emph{centered coloring} of graphs~\cite{NesetrilO2012}.

For a digraph $G$ and an integer $k \geq 0$, a function $h : V(G) \to [0, k]$ is a \emph{CW-ranking with $k$ ranks} if every closed directed walk $C$ of $G$ has a unique vertex $v$ for which $h(v) = \max_{w \in V(C)} h(w)$.
The \emph{CW-ranking number} of a digraph $G$, denoted by $\chi_\cw(G)$, is the minimum $k$ such that $G$ admits a CW-ranking with $k$ ranks.
We prove in~\Cref{CW_ranking_equivalence} that the cycle rank of a digraph is equal to its CW-ranking number. 

At each node of a directed clique-width expression tree, we need to store the possible CW-rankings of the corresponding subdigraph. At a later node, we need to determine whether a new closed walk containing two vertices of maximum rank is created. To this end, we store certain reachability information for vertices. More precisely, for label sets $A$ and $B$, we count the number of vertices $v$ such that $A$ is exactly the set of labels $a$ for which there exists a directed path from a vertex of label $a$ to $v$, and $B$ is exactly the set of labels $b$ for which there exists a directed path from $v$ to a vertex of label $b$. If there are two such vertices, then adding all edges from vertices of label in $B$ to vertices of label in $A$ creates a bad closed walk. Therefore, it is important to keep track of this information, and we do so using dynamic programming.

We remark that an analogous result is known for undirected tree-depth. Reidl, Rossmanith, Vilaamil, and Sikdar~\cite{ReidlRVS2014} showed that the tree-depth of a graph $G$ can be computed exactly in time $2^{\mathcal{O}(\mathsf{td}(G)^2)}n$.
In the course of proving this result, the authors showed that given a tree decomposition of the graph of width $w$ and an integer $t$, one can determine whether the tree-depth of the graph is at most $t$ in time $2^{\mathcal{O}(wt)}n$. 
A well-known open problem is whether there exists a constant-factor approximation algorithm running in time $2^{o(\mathsf{td}(G)^2)}\cdot n^c$ for any constant $c\in\mathbb{N}$; see~\cite{pilipczuk2025}.

Our FPT algorithm could potentially be used to show that certain algorithmic problems on semi-complete digraphs are FPT parameterized by cycle rank. If a problem is MSO$_1$-expressible, then one can use the meta-theorem for directed clique-width; otherwise, establishing such a result may be nontrivial. 
We could not find a relevant result of this kind. 
But we initiate the study of a natural problem on semi-complete digraphs: \(\textsc{Feedback Arc Set}\) parameterized by cycle rank. 
The problem parameterized by the solution size has been studied in the literature~\cite{AlonLS2009, ChenLLOR2008, MisraSSZ2023}.

\paragraph{\textsc{Minimum Feedback Arc Set} on semi-complete digraphs.}
We investigate the \textsc{Minimum Feedback Arc Set} problem on semi-complete digraphs.
The problem asks for a minimum number of edges whose removal makes the input digraph acyclic.
While \textsc{Minimum Feedback Arc Set} on general digraphs is one of Karp's 21 \textsf{NP}-complete problems~\cite{Karp1972}, its \textsf{NP}-completeness on tournaments was conjectured by Bang-Jensen and Thomassen~\cite{BangJensenT1992}.
This conjecture was solved independently by Alon~\cite{Alon2006}, Charbit, Thomass{\'e}, and Yeo~\cite{CharbitTY2007}, and Conitzer~\cite{Conitzer2006}, who proved that \textsc{Minimum Feedback Arc Set} remains \textsf{NP}-complete even on tournaments, and consequently on semi-complete digraphs. Thus, it is natural to consider the problem parameterized by structural parameters such as cycle rank. 

We show that \textsc{Minimum Feedback Arc Set} on semi-complete digraphs is XP parameterized by the cycle rank of the input digraph.
\begin{restatable}{theorem}{thmMFAS}\label{thm:MFAS}
\textsc{Minimum Feedback Arc Set} on semi-complete digraphs can be solved in time $n^{\mathcal{O}(w)}$, where $n$ is the number of vertices and $w$ is the cycle rank of an input digraph.
\end{restatable}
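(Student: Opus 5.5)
Since the cycle rank is not known in advance, we first compute it. By \Cref{thm:main1} we run the decision algorithm for $w'=0,1,2,\dots$ and stop at the first yes-answer. This yields $w=\crank(G)$ together with a cycle rank decomposition of depth $w$, in time $f(w)\cdot n^2$, which is dominated by $n^{\mathcal{O}(w)}$ whenever $w\ge 1$. The case $w=0$ is trivial, since then $G$ is acyclic. The decomposition amounts to a rooted forest: at each strongly connected piece we choose a vertex $v$, delete it, and recurse on the strong components of the remainder. Every root-to-leaf path has at most $w$ deletions.

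Our plan is a recursion along this decomposition. We use the standard fact that a minimum feedback arc set is the set of backward arcs of a linear ordering of $V(G)$ that minimises the number of backward arcs. Moreover, an optimal ordering can be taken to respect the acyclic order of the strong components. So it suffices to solve each strong component $H$ separately, with separate optima adding up. In $H$ the decomposition picks a vertex $v$, and $H-v$ splits into strong components $H_1,\dots,H_t$. Since $H$ is semi-complete, these components are linearly ordered: every pair of components has arcs between them, and these arcs go only forward in the condensation order. Each $H_i$ has cycle rank at most $w-1$.

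For the recursive step, fix an optimal ordering of $H$. Removing $v$ from it leaves an ordering of $H-v$, and by an exchange argument we may assume this ordering is a concatenation of optimal orderings $\sigma_i$ of the $H_i$ in condensation order. The exchange argument works because arcs between different components all point forward, so grouping the components never creates backward arcs among them. It remains to decide where to insert $v$. Here the difficulty appears: the best position of $v$ interacts with the internal orderings of the components. Within the one component that contains the insertion point, $v$ may split that component's ordering. The optimal ordering of that component relative to $v$ need not be an unconstrained optimum.

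To handle this, we strengthen the subproblem. For a sub-digraph $H'$ in the decomposition and a set $S$ of at most $w$ ancestor vertices (the vertices deleted above it), each with a prescribed cut position, we compute the minimum cost of ordering $H'$. The cost counts backward arcs inside $H'$ plus backward arcs between $H'$ and the vertices of $S$, where the vertices of $S$ are placed at the specified gaps. A position of a vertex of $S$ relative to $H'$ is described by a prefix of the ordering, and prefixes are not polynomially many in general. The key step, and the main obstacle, is to show that the relevant prefix can be encoded by polynomially many parameters per ancestor. The structure supports this: after the recursion, each ancestor's cut is determined by an index into the linear order of strong components at every level, together with the cut inside the single component it falls into, recursively. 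Hence each ancestor's cut is specified by one component index at each of at most $w$ levels, giving $n^{\mathcal{O}(w)}$ possibilities for an ancestor and $n^{\mathcal{O}(w^2)}$ for all ancestors together.

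To reach the claimed $n^{\mathcal{O}(w)}$, we instead keep only the cuts that fall inside the current sub-digraph. Along any chain these cuts are nested through the components, so at each level we store one index per ancestor. That gives $n^{\mathcal{O}(w)}$ states per node and $n^{\mathcal{O}(1)}$ work per transition, which sums to time $n^{\mathcal{O}(w)}$. The recurrence combines the components $H_i$ in their fixed order, iterating over which component receives $v$'s cut. Correctness of this step rests on the exchange argument above.
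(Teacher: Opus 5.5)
\textbf{The gap.} Your argument fails at the structural claim about a node $t$ of the decomposition. You assume the ordering of $S_t\setminus\{t\}$ lists the child components in condensation order, with every cut (of $t$ and of each ancestor) falling inside a single component. Your justification, that grouping components creates no backward arcs among them, ignores that regrouping moves vertices to the other side of the cut vertices.

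With a single cut vertex $v$ this can be repaired by counting. Let $H_k$ be the first component with a vertex after $v$. Move every vertex of a later component that lies before $v$ to after $v$, keeping internal orders. This costs at most one arc per moved vertex, and removes at least that many backward arcs to the vertices of $H_k$ after $v$.

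With two or more cut vertices, a moved vertex can cost one arc per cut it crosses, and the claim is false. Take the tournament on $a,t,x,y$ with arcs $a\to t$, $a\to x$, $t\to x$, $x\to y$, $y\to a$, $y\to t$. Use the cycle rank decomposition with root $a$, child $t$, and leaves $x,y$ below $t$. This is valid: $G-a$ is the directed triangle on $t,x,y$, and $G[\{x,y\}]$ has strong components $\{x\}$ before $\{y\}$.

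Every cycle passes through $x$, whose only out-arc is $x\to y$, and deleting any other single arc leaves a cycle. So the unique optimal ordering is the topological order $y,a,t,x$ of $G$ minus $x\to y$, with one backward arc. In it, the later child $y$ precedes the earlier child $x$, and both cuts $a,t$ lie between them. Hence every ordering of your normal form has at least two backward arcs, and a recurrence that combines the $H_i$ in their fixed order returns $2$ instead of $1$.

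Your exchange argument never uses minimality of depth. Moreover, \Cref{thm:main1} can output exactly this decomposition when the tournament is a strong component of a larger semi-complete input of cycle rank at least $2$.

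\textbf{The missing idea.} You need to allow arbitrary interleaving of the child subtrees across the gaps, and use semi-completeness to make that interleaving cheap to describe. The paper fixes an order of the at most $w$ ancestors of $t$ and records only how many vertices of $S_t$ lie in each of the at most $w+1$ gaps.

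Between two different child subtrees, every pair of vertices is joined by exactly one arc, directed forward in the condensation order. So within a gap the earlier subtree can be placed first at no loss. The number of backward arcs between the already merged children and the next child is then determined by the counts alone, namely $\sum_{k<m}y_k(x_m-y_m)$.

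For each guessed position of $t$ among its ancestors, the children are merged one at a time with this penalty. This gives $w!\cdot n^{\mathcal{O}(w)}$ states per node, with no prefix or index bookkeeping.

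\textbf{A minor point on running time.} Strictly speaking, $9^{(w+1)4^{w+2}}n^2$ is not bounded by $n^{\mathcal{O}(w)}$ when $\log n$ is small compared with $4^{w}$. If you want the bound literally, compute a decomposition of depth $\crank(G)$ by brute-force recursion: try every vertex of each strong component, to depth $w$. This runs in time $n^{\mathcal{O}(w)}$.
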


We note that, unlike the \textsc{Minimum Directed Feedback Vertex Set} problem, finding a minimum feedback arc set requires optimizing over edge subsets.
This distinction becomes critical when considering the LinEMSO$_1$ optimization problem (see \cite{Courcelle2000lineartimesolvable} for the definition of LinEMSO$_1$ optimization problem).
By the definition of a LinEMSO$_1$ optimization problem, any evaluation function for such a problem minimizes a linear combination of the sizes of vertex subsets, implying its optimal value is bounded by $\mathcal{O}(n)$ on unweighted $n$-vertex digraphs.
However, the size of a minimum feedback arc set can be $\Omega(n^2)$ even on tournaments~\cite[Theorem~2.9.5]{BangJensen2018}.
Due to this asymptotic difference, the problem cannot be expressed as a LinEMSO$_1$ optimization problem.
Consequently, the tractability of \textsc{Minimum Feedback Arc Set} on semi-complete digraphs does not trivially follow from the meta-theorem on bounded directed clique-width classes by Courcelle, Makowsky, and Rotics~\cite{Courcelle2000lineartimesolvable} (see also Theorem 4.2 in \cite{Ganian2014} for the directed version of the theorem).

\paragraph{Organization.}
The rest of the paper is organized as follows.
In \cref{sec:prelim}, we provide some notations and definitions.
In \cref{sec:CW-ranking}, we introduce the concept of CW-rankings and prove its equivalence to the cycle rank.
In \cref{sec:bdddcw}, we show that deciding the cycle rank of a given digraph of bounded directed clique-width is fixed-parameter tractable (\cref{thm:main}).
In \cref{sec:semicomplete}, we show the analogous result for semi-complete digraphs (\cref{thm:main1}).
In \cref{sec:minfas}, we show that \textsc{Minimum Feedback Arc Set} on semi-complete digraphs is slicewise polynomial parameterized by the cycle rank of an input digraph (\cref{thm:MFAS}).
In \cref{sec:conclusion}, we mention several concluding remarks.

\section{Preliminaries}\label{sec:prelim}

For two integers $n_1$ and $n_2$, let $[n_1,n_2]$ denote the set of all integers $i$ with $n_1\le i\le n_2$.
For a positive integer $n$, let $[n]\coloneqq [1,n]$.

In this paper, all graphs and digraphs are finite and simple, but digraphs may have antiparallel edges.
For a digraph $G$, we denote the vertex set and the edge set of $G$ by $V(G)$ and $E(G)$, respectively.
For a set $A$ of vertices in a digraph $G$, we denote by $G[A]$ the subdigraph of $G$ induced by $A$, and we let $G - A \coloneqq G[V(G) \setminus A]$.
When $A=\{v\}$, we write $G-v\coloneqq G-\{v\}$.
A digraph is \emph{acyclic} if it has no directed cycles.
A digraph $G$ is \emph{strongly connected} if for every pair $(u, v)$ of vertices of $G$, there is a directed path from $u$ to $v$ in $G$.
A \emph{strongly connected component} of a digraph is its maximal strongly connected subdigraph.

A \emph{rooted tree} $(T,r)$ is a tree $T$ with a distinguished vertex $r\in V(T)$ called the \emph{root}, and a \emph{rooted forest} is a disjoint union of rooted trees.
Given a vertex $x$ in a rooted tree $(T,r)$, the \emph{depth} of $x$ is the length of the $(r,x)$-path in $T$, and the \emph{depth} of $(T,r)$ is the maximum depth over all vertices of $T$.
For two vertices $x,y$ in a rooted tree $(T,r)$, we say that $x$ is an \emph{ancestor} of $y$, and $y$ is a \emph{descendant} of $x$, if $x$ lies on the $(r,y)$-path in $T$.
Furthermore, if $x\neq y$, then we say that $x$ is a \emph{strict ancestor} of $y$, and $y$ is a \emph{strict descendant} of $x$.
We say that $x$ is the \emph{parent} of $y$, and $y$ is a \emph{child} of $x$, if $x$ is an ancestor of $y$ that is adjacent to $y$.
Given a rooted tree $(T, r)$ and a vertex $t \in V(T)$, we denote by $T_t$ the subtree of $T$ rooted at $t$ and induced by the descendants of $t$.
We define the \emph{depth} of a rooted forest $F$ as the maximum depth over its connected components, and denote by $F_t$ the subtree rooted at $t \in V(F)$.

\subsection{Cycle rank}\label{subsection:cyclerank}
The \emph{cycle rank} of a digraph $G$, denoted by $\crank(G)$, is defined by the following recursive formula:
\begin{itemize}
    \item If $G$ is acyclic, then $\crank(G) = 0$.
    \item If $G$ is not strongly connected, then $\crank(G)$ equals the maximum cycle rank among the strongly-connected components of $G$.
    \item If $G$ is strongly connected and $\lvert V(G) \rvert \geq 2$, then $\crank(G) = 1 + \min_{v \in V(G)} \crank(G - v)$.
\end{itemize}

The recursive definition of cycle rank naturally motivates the following decomposition scheme for digraphs.
A \emph{cycle rank decomposition} of a digraph $G$ is a rooted forest $F$ with $V(F) = V(G)$ that satisfies the following:
\begin{itemize}
    \item For each connected component $C$ of $F$, $G[V(C)]$ is a strongly connected component of $G$.
    \item For each $t \in V(F)$ and its child $s$ in $F$, $G[V(F_s)]$ is a strongly connected component of $G[V(F_t)]-t$.
\end{itemize}

\begin{proposition}[McNaughton~\cite{MCNAUGHTON1969CC}]\label{cr_decomposition}
    Let $G$ be a digraph.
    The cycle rank of $G$ is equal to the minimum depth of a cycle rank decomposition of $G$.
\end{proposition}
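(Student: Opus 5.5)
We prove the two inequalities separately, each by induction on $\abs{V(G)}$, following the three cases of the recursive definition of $\crank$. A useful preliminary fact is that the definition forces the structure of a cycle rank decomposition $F$ completely. The trees of $F$ are in bijection with the strongly connected components of $G$. Inside a tree rooted at $t$, the subtrees at the children of $t$ are in bijection with the strongly connected components of $G[V(F_t)]-t$. Hence $F$ is determined by choosing, for every strongly connected subdigraph that arises during the recursion, which vertex is its root.

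\emph{Upper bound: min depth $\le \crank(G)$.} We build a decomposition of depth at most $\crank(G)$ recursively. If $G$ is not strongly connected, we take decompositions of its strongly connected components (each has fewer vertices) and form their disjoint union. The depth of this forest is the maximum of the component depths, which is at most $\crank(G)$ by the second rule of the definition.

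If $G$ is strongly connected with at least two vertices, we pick $v$ attaining $\crank(G)=1+\crank(G-v)$. By induction, $G-v$ has a decomposition $F'$ of depth at most $\crank(G-v)$, and its trees span exactly the strongly connected components of $G-v$. We make $v$ the root and attach the roots of $F'$ as its children. Both decomposition conditions then hold, and the depth is at most $1+\crank(G-v)$.

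The remaining cases need care with base conventions. A single vertex, or any acyclic $G$, has cycle rank $0$. For acyclic $G$ every strongly connected component is a single vertex, so the forest of isolated roots has depth $0$ and is a valid decomposition. Note that the definition lists the acyclic case first. A strongly connected digraph with $\ge 2$ vertices is not acyclic, so there is no conflict between the cases.

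\emph{Lower bound: $\crank(G)\le$ depth of any decomposition $F$.} We induct on $\abs{V(G)}$ with a statement strong enough to cover induced subdigraphs: if $F$ is a cycle rank decomposition of $G$, then $\crank(G)\le\depth(F)$.

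If $G$ is acyclic the claim is trivial. If $G$ is not strongly connected, each tree $C$ of $F$ is itself a cycle rank decomposition of the strongly connected component $G[V(C)]$. The first condition holds trivially, and the second is inherited from $F$. Induction then gives $\crank(G)=\max_C \crank(G[V(C)])\le \depth(F)$.

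If $G$ is strongly connected with $\ge 2$ vertices, then $F$ is a single tree rooted at some $t$. The subtrees at the children of $t$ form a forest $F'$ that is a cycle rank decomposition of $G-t$: its trees are exactly the strongly connected components of $G-t=G[V(F_t)]-t$, and the deeper conditions are inherited. Since $\depth(F')\le\depth(F)-1$, induction gives $\crank(G)\le 1+\crank(G-t)\le \depth(F)$.

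The main obstacle is purely bookkeeping: checking that restricting $F$ to a subtree or child-forest yields a valid decomposition of the corresponding induced subdigraph. This follows directly from the local form of the second condition. Also, if $G$ is a single vertex that is acyclic, the depth is $0$ either way.
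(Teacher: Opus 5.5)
Your proof is correct. You show that the minimum decomposition depth and $\crank$ satisfy the same recursion: a maximum over strongly connected components, and, for a strongly connected digraph on at least two vertices, one plus the value on $G-t$ for the chosen root $t$. The restriction and assembly checks you carry out (child subtrees partition $V(F_t)\setminus\{t\}$ and are therefore exactly the strongly connected components of $G[V(F_t)]-t$, and the second condition is inherited by subtrees) are what that identification requires.

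The paper does not prove this statement; it cites it from McNaughton, so there is no in-paper argument to compare against. Your root-plus-recursion-on-components construction is the same one the paper uses in the second half of the proof of Proposition~\ref{CW_ranking_equivalence}.
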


\subsection{Directed clique-width}

Let $k$ be a positive integer.
A \emph{$k$-labeled digraph} is a tuple $(G,\lab_G)$ where $G$ is a digraph, called the \emph{underlying digraph}, and $\lab_G:V(G)\to [k]$ is a function called a \emph{labeling function}.
For an integer $k \geq 1$ and distinct $i, j \in [k]$, we define the following operations.
\begin{itemize}
    \item Let $\alpha_{i,j}(G,\lab_G)$ be the $k$-labeled digraph obtained from $(G,\lab_G)$ by adding an edge from every vertex of label $i$ to every vertex of label $j$.
    \item Let $\rho_{i\to j}(G,\lab_G)$ be the $k$-labeled digraph obtained from $(G,\lab_G)$ by relabeling every vertex of label $i$ to label $j$.
    \item Let $(G,\lab_G)\oplus(H,\lab_H)$ be the disjoint union of two $k$-labeled digraphs $(G,\lab_G)$ and $(H,\lab_H)$.
\end{itemize}
By using these operations, we define the class $\DCW_k$ of $k$-labeled digraphs as follows:
\begin{itemize}
    \item The digraph $i(v)$ with a single vertex $v$ labeled by $i \in [k]$ is in $\DCW_k$.
    \item If $(G,\lab_G)$ and $(H,\lab_H)$ are two vertex-disjoint $k$-labeled digraphs in $\DCW_k$, then $(G,\lab_G)\oplus (H,\lab_H)$ is in $\DCW_k$.
    \item If $(G,\lab_G)\in \DCW_k$, then for every distinct $i, j \in [k]$, both $\alpha_{i, j}(G, \lab_G)$ and $\rho_{i \to j}(G, \lab_G)$ are in $\DCW_k$.
\end{itemize}
A \emph{directed $k$-expression} is an expression obtained by recursively applying the above operations.
The \emph{directed clique-width} of $G$, denoted by $\dcw(G)$, is the smallest integer $k$ such that $G$ is the underlying digraph of a member of $\DCW_k$.

A directed $k$-expression is naturally associated with a parse tree, called a \emph{directed $k$-expression tree}. 
A directed $k$-expression tree of a digraph $G$ is a rooted labeled tree $(T,r,\op,\mathcal{G})$, where
\begin{itemize}
    \item $(T,r)$ is a rooted tree, 
    \item if $t \in V(T)$ is a leaf, then $\op(t)=i(v)$ for some $i\in [k]$ and $v\in V(G)$, and otherwise, $\op(t)$ is one of $\oplus$, $\alpha_{i,j}$, $\rho_{i\to j}$ for distinct $i,j\in [k]$, 
    \item if $t$ is a leaf and $\op(t)=i(v_t)$ for some $i\in [k]$ and $v_t\in V(G)$, then $\mathcal{G}(t)=i(v_t)$, and the map $t\mapsto v_t$ is a bijection from the leaves of $T$ to $V(G)$,
    \item if $t$ is an internal node of $T$ and $\op(t) \in \{\alpha_{i,j}, \rho_{i\to j}\}$, then $t$ has a unique child $s$ and $\mathcal{G}(t)=\op(t) (\mathcal{G}(s))$,
    \item if $t$ is an internal node of $T$ and $\op(t)=\oplus$, then $t$ has two children $s_1, s_2$, and $\mathcal{G}(t)=\mathcal{G}(s_1)\oplus \mathcal{G}(s_2)$, and
    \item the underlying digraph of $\mathcal{G}(r)$ is $G$.
\end{itemize}
Clearly, a digraph $G$ has directed clique-width at most $k$ if and only if $G$ admits a directed $k$-expression tree.

Although an $n$-vertex digraph may admit a directed $k$-expression, its corresponding directed $k$-expression tree can have too many nodes.
However, as mentioned in the proof of~\cite[Theorem~3]{gurski21acycliccoloring}, we may assume that the directed $k$-expression is in the \emph{normal form} so that for every subexpression, each disjoint union operation is followed by edge-insertion operations and then by relabeling operations.
In this normal form, the corresponding directed $k$-expression tree has $\mathcal{O}(k^2 \cdot n)$ nodes.

\section{CW-ranking number}\label{sec:CW-ranking}

In this section, we prove that the cycle rank can be described in terms of vertex coloring with certain connectivity property.
Recall that, for a digraph $G$ and an integer $k \geq 0$, a function $h:V(G)\to [0, k]$ is a \emph{CW-ranking of $G$ with $k$ ranks} if every closed walk $C$ of $G$ has a unique vertex $v$ for which $h(v)=\max_{u\in V(C)} h(u)$.
The \emph{CW-ranking number} of a digraph $G$, denoted by $\chi_\cw(G)$, is the minimum $k$ such that $G$ admits a CW-ranking with $k$ ranks. 

The following statements are immediate from the definition of the CW-ranking number:
\begin{observation}\label{CW_ranking_property}
    Let $G$ be a digraph.
    \begin{enumerate}[label=(\alph*)]
        \item The CW-ranking number of $G$ is $0$ if and only if $G$ is acyclic.
        \item\label{CW_ranking_property-2} The CW-ranking number of $G$ equals the maximum of the CW-ranking number of the strongly connected components of $G$.
    \end{enumerate}
\end{observation}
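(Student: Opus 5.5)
We prove the two parts separately, working directly from the definition of a CW-ranking. The only structural fact needed is this: every closed walk of $G$ lies entirely inside one strongly connected component of $G$. Indeed, any two vertices on a closed walk can reach each other along the walk.

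For (a), suppose first that $G$ is acyclic. Then $G$ has no closed walk of positive length, so the constant function $h\equiv 0$ vacuously satisfies the condition, and $\chi_\cw(G)=0$. We treat a single vertex as a trivial closed walk; it has a unique maximum automatically, so it causes no trouble. Conversely, suppose $\chi_\cw(G)=0$, so the ranking $h$ takes only the value $0$. If $G$ contained a directed cycle, that cycle has at least two distinct vertices. All of them attain the maximum value $0$, so the maximum is not unique, a contradiction. Hence $G$ is acyclic.

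For (b), let $G_1,\dots,G_m$ be the strongly connected components of $G$ and set $k=\max_i \chi_\cw(G_i)$.

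\emph{Lower bound.} Let $h$ be a CW-ranking of $G$ with $\chi_\cw(G)$ ranks. Its restriction $h|_{V(G_i)}$ is a CW-ranking of $G_i$ with the same number of ranks, because every closed walk of $G_i$ is also a closed walk of $G$. This gives $\chi_\cw(G_i)\le \chi_\cw(G)$ for every $i$.

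\emph{Upper bound.} For each $i$, choose a CW-ranking $h_i\colon V(G_i)\to[0,\chi_\cw(G_i)]\subseteq[0,k]$ of $G_i$. Since the components partition $V(G)$, we can glue these into a single function $h$ on $V(G)$. Take any closed walk $C$ of $G$. By the structural fact above, $C$ is contained in a single component $G_i$, so it is a closed walk of $G_i$. Its maximum under $h=h_i$ is therefore attained uniquely. So $h$ is a CW-ranking of $G$ with $k$ ranks, and $\chi_\cw(G)\le k$.

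The only step with any content is the containment of closed walks in strongly connected components, and that is immediate. There is no real obstacle; the main care needed is the convention on trivial closed walks in part (a).
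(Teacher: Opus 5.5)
Your proof is correct and is the routine verification the paper has in mind; the paper gives no argument and simply calls the observation immediate from the definition. The key fact you isolate, that every closed walk lies inside a single strongly connected component, is exactly what makes part (b) work. Your handling of trivial closed walks and of the absence of loops in part (a) covers the only points needing care.
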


We now establish that the two parameters are equivalent, and provide a polynomial time algorithm to compute a cycle rank decomposition from a CW-ranking.
\begin{proposition}\label{CW_ranking_equivalence}
For every digraph $G$, it holds that $\crank(G) = \chi_\cw(G)$.
Moreover, given a CW-ranking $h:V(G)\to [0,k]$, one can compute a cycle rank decomposition of $G$ of depth at most $k$ in time $\mathcal{O}(k(n+m))$, where $n\coloneqq \abs{V(G)}$ and $m\coloneqq \abs{E(G)}$.
\end{proposition}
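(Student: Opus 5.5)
We prove the two inequalities separately, each by induction on $\abs{V(G)}$ using the recursive definition of cycle rank, and then read the algorithm off the proof of $\crank(G)\le\chi_\cw(G)$.

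\emph{The inequality $\chi_\cw(G)\le \crank(G)$.} We show by induction that every digraph $G$ admits a CW-ranking with $\crank(G)$ ranks. If $G$ is acyclic, the constant function $0$ works, because $G$ has no closed walks of positive length. A single vertex counts as a closed walk with a unique maximum, so it causes no issue. If $G$ is not strongly connected, every closed walk lies inside one strongly connected component. We take CW-rankings of the components given by induction and combine them, using \Cref{CW_ranking_property}\ref{CW_ranking_property-2}. If $G$ is strongly connected with $\abs{V(G)}\ge 2$, we choose $v$ with $\crank(G-v)=\crank(G)-1$. We take a CW-ranking $h'$ of $G-v$ with $\crank(G)-1$ ranks and set $h(v)=\crank(G)$, $h=h'$ elsewhere. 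Consider a closed walk. If it avoids $v$, it has a unique maximum by the choice of $h'$. Otherwise $v$ is its unique maximum, since $v$ is the only vertex with rank $\crank(G)$.

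\emph{The inequality $\crank(G)\le\chi_\cw(G)$.} Let $h$ be a CW-ranking with $k$ ranks. By induction on $\abs{V(G)}$ we show $\crank(G)\le k$. If $G$ is acyclic, this is trivial. If $G$ is not strongly connected, we restrict $h$ to each strongly connected component and apply induction. Now suppose $G$ is strongly connected with $\abs{V(G)}\ge 2$, and let $M=\max h$.

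\emph{Key claim: exactly one vertex has rank $M$.} Suppose $u\neq u'$ both have rank $M$. By strong connectivity, concatenating a $u\to u'$ path with a $u'\to u$ path gives a closed walk containing both, which contradicts uniqueness of the maximum. Call the unique vertex $v$.

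Then $M\ge 1$, because $G$ contains a cycle of length at least $2$ and that cycle needs a unique maximum. The restriction of $h$ to $G-v$ takes values in $[0,M-1]$. Closed walks in $G-v$ are closed walks in $G$, so this restriction is a CW-ranking with at most $M-1\le k-1$ ranks. By induction $\crank(G-v)\le k-1$, and hence $\crank(G)\le k$.

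\emph{The algorithm.} We make the second induction constructive. We compute the strongly connected components of $G$ with Tarjan's algorithm. In each component $C$ with at least two vertices, the unique vertex of maximum rank becomes the root of a tree, and we recurse on $C$ minus that vertex, attaching the roots of the resulting trees as its children. A single-vertex component becomes a leaf. This output matches the definition of a cycle rank decomposition directly. By the key claim, each recursion level strictly lowers the maximum rank present, so there are at most $k+1$ levels. The graphs processed at one level are vertex-disjoint induced subgraphs, so each level costs $\mathcal{O}(n+m)$ for the SCC computation plus a linear scan to find maxima. The total is $\mathcal{O}(k(n+m))$, and the depth is at most $k$.

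The main obstacle is the key claim that a strongly connected digraph has a unique top-ranked vertex. This is the point where the definition's use of closed \emph{walks} rather than cycles matters, since the concatenated path is generally not a cycle. The other delicate point is the running-time bookkeeping: it depends on the rank strictly decreasing along every branch, so that depth is bounded by $k$ rather than by $n$.
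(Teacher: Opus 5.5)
Your proof is correct. For $\crank(G)\le\chi_\cw(G)$ and for the algorithm you follow the paper's argument: a strongly connected digraph has a unique top-ranked vertex (the paper uses a spanning closed walk, you concatenate two paths, same idea), that rank is at least $1$, and one recurses on the strongly connected components after deleting that vertex.

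The genuine difference is the direction $\chi_\cw(G)\le\crank(G)$. The paper invokes McNaughton's characterization (\Cref{cr_decomposition}) to obtain a cycle rank decomposition of depth $k$ and sets $h(v)=k-\depth(v)$. It then shows that every closed walk lies in the subtree of its shallowest vertex, so that vertex is the unique maximum. You instead induct directly on the recursive definition of $\crank$. You place the deleted vertex at rank $\crank(G)$, on top of a ranking of $G-v$ with $\crank(G)-1$ ranks.

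Your route is more self-contained. Nowhere do you need \Cref{cr_decomposition}: you get $\crank(G)\le k$ from $\crank(G)\le 1+\crank(G-v)$, and you check the output forest against the definition of a cycle rank decomposition directly. The paper leans on McNaughton to pass between decompositions and $\crank$, including implicitly when it concludes $\crank(G)\le k$ from the decomposition it constructs. In exchange, the paper obtains an explicit formula that turns any given decomposition into a CW-ranking.

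Your level-by-level running-time accounting (vertex-disjoint induced subgraphs at each level, at most $k+1$ levels) also makes explicit what the paper's per-recursive-step phrasing leaves implicit.
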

\begin{proof}
Let $k \geq 0$ be an integer and let $G$ be a digraph.
By~\cref{CW_ranking_property} and the recursive definition of the cycle rank, it suffices to prove the statement when $G$ is strongly connected.

To show that $\crank(G) \leq \chi_\cw(G)$, suppose that $\crank(G) \leq k$.
By~\Cref{cr_decomposition}, $G$ has a cycle rank decomposition $(T, r)$ of depth at most~$k$.
For each $v\in V(G)$, let $\depth(v)$ be the depth of $v$ in $T$ and let $h : V(G) \to [0, k]$ be the function defined by $h(v) = k - \depth(v)$.
We claim that $h$ is a CW-ranking of $G$.
Let $C$ be a closed walk in $G$ and let $u \in V(C)$ be a vertex with minimum depth.
We claim that all vertices in $C$ belong to $T_u$.
The claim is straightforward when $u=r$, so suppose this is not the case.
Let $Z$ be the set of all proper ancestors of $u$ in $T$.
By the definition of the cycle rank decomposition, the vertices in the subtree $T_u$ of $T$ induce a strongly connected component of $G-Z$.
Since $C$ is a closed walk and $u \in V(C)$, we have $V(C)\subseteq V(T_u)$, as otherwise, $C$ contains a vertex in $Z$, which contradicts the minimality of $u$.
Thus, every vertex $v\in V(C)\setminus \{u\}$ is a proper descendant of $u$ in $T$, so $h(u) > h(v)$ for all $v\in V(C)\setminus \{u\}$.
This shows that $u$ is the unique vertex in $C$ which maximizes $h$, implying that $h$ is a CW-ranking of $G$.
Hence, we have $\chi_\cw(G) \leq k$.

To show the converse and algorithmic statement, suppose that $G$ admits a CW-ranking $h: V(G)\to [0,k]$.
We prove the existence of a cycle rank decomposition of depth at most $k$, constructively.
When $\abs{V(G)}=1$, we make a single-vertex rooted tree $(T,r)$ with $V(T)=V(G)$, which takes time $\mathcal{O}(1)$.
Assume $\abs{V(G)}\ge 2$.
Let $M\coloneqq \max_{v\in V(G)}h(v)$.
Since $G$ is strongly connected, it contains a spanning directed closed walk.
The definition of a CW-ranking applied to such a walk then guarantees the existence of a unique vertex $r \in V(G)$ satisfying $h(r)=M$.
Moreover, since $\abs{V(G)}\ge 2$ and $G$ is strongly connected, $G$ contains at least one directed cycle.
This, in particular, implies that $M \geq 1$.

Let $G_1,\ldots, G_t$ be the strongly connected components of $G-r$.
One can find these strongly connected components in time $\mathcal{O}(n+m)$ by using Tarjan's algorithm~\cite{Tarjan1972}.
For each $i \in [t]$, the restriction $h_i \coloneqq h\rvert_{V(G_i)}$ is a CW-ranking of $G_i$.
Since $r$ is the unique vertex of rank $M$ in $G$, the image of $h_i$ is contained in $[0,M-1]\subseteq [0,k-1]$.
We recursively apply this procedure to each $G_i$ to obtain a cycle rank decomposition $(T_i,r_i)$ of depth at most $k-1$.
Let $T$ be the rooted tree obtained from the disjoint union of $T_1,\ldots, T_t$ and the vertex $r$ by adding an edge from $r$ to each $r_i$, with $r$ as the root.
By construction, $(T,r)$ is a cycle rank decomposition of $G$ of depth at most~$k$.

Finally, we analyze the running time of this construction.
In each recursive step, finding the unique vertex having maximum rank and computing the strongly connected components takes time $\mathcal{O}(n+m)$.
Since the maximum rank strictly decreases at each level of the recursion, the depth of this recursion is at most $k+1$.
Therefore, the total running time is $\mathcal{O}(k\cdot (n+m))$.
This proves~\Cref{CW_ranking_equivalence}.
\end{proof}

\section{An FPT algorithm for bounded directed clique-width classes}\label{sec:bdddcw}

In this section, we prove~\Cref{thm:main}, which we restate for the reader's convenience.

\mainthm*

\begin{proof}
Let $G$ be an $n$-vertex digraph and let $\mathbf{T} = (T,r,\op, \mathcal{G})$ be a directed $k$-expression tree of $G$.
In particular, we may assume that the directed $k$-expression of $G$ is given in the normal form so that~$\lvert V(T) \rvert = \mathcal{O}(k^2 \cdot n)$.
For a node $t$ of $T$, let $(G_t,\lab_t)\coloneqq \mathcal{G}(t)$ be the $k$-labeled digraph associated with $t$.

Let $f$ be a function from $V(G_t)$ to $[0,w]$.
For such a function $f$ and a vertex $v\in V(G_t)$, we say that $f(v)$ is the \emph{rank} of $v$ with respect to $f$.
We say that $f$ is \emph{valid} if it is a CW-ranking of $G_t$.
For each $\ell\in [k]$, we define
\[V(t,\ell)\coloneqq \left\{v\in V(G_t):\lab_t(v)=\ell\right\}.\]
For a vertex $v\in V(G_t)$ with rank $f(v)=m$, we define its \emph{in-label set} $I(t,v,f)\subseteq [k]$ and \emph{out-label set} $O(t,v,f)\subseteq [k]$ as follows:
\begin{itemize}
    \item $\ell\in I(t,v,f)$ if and only if there is a directed path in $G_t$ from $V(t,\ell)$ to $v$ all of whose vertices have rank at most $m$, and
    \item $\ell\in O(t,v,f)$ if and only if there is a directed path in $G_t$ from $v$ to $V(t,\ell)$ all of whose vertices have rank at most $m$.
\end{itemize}
We note that the maximum rank on any such path is $m$, since $f(v)=m$.

\begin{figure}[t]
\centering
    \begin{tikzpicture}[
    scale=0.7,
    line width=0.9pt,
    midarrow/.style={
        postaction={decorate},
        decoration={markings, mark=at position 0.55 with {\arrow{Stealth[length=4mm]}}}
    },
    >=Stealth
]

\draw[very thick, rounded corners] (0,1) rectangle (4.8,7);          %
\draw[very thick, rounded corners] (6.5,1) rectangle (11.5,7);     %

\draw[very thick, rounded corners] (2.8,2) rectangle (3.8,6);      %
\draw[very thick, rounded corners] (7.5,2)   rectangle (8.5,6);        %

\node at (-0.55,3.6) {\Large $B$};
\node at (12.05,3.6) {\Large $A$};
\node at (3.3,6.45)  {\Large $a$};
\node at (8,6.45)  {\Large $b$};

\draw[midarrow] (3.6,5.5)
    .. controls (5.2,7.05) and (6.3,7.0) .. (7.8,5.6);

\draw[->] (7.7,2)
    .. controls (7.6,-0.4) and (6.7,-1.2) .. (5.7,-1.2)   
    .. controls (4.7,-1.2) and (3.5,-0.5) .. (3.4,2);

\draw[->] (8.2,2)
    .. controls (8.9,-0.3) and (7.2,-1.8) .. (5.8,-1.8)   
    .. controls (3.9,-1.8) and (1.8,-0.9) .. (2.2,1.6)
    .. controls (2.4,2.85) and (2.65,3.05) .. (3.05,3.2);

\fill (5.7,-1.2) circle (3pt);
\fill (5.8,-1.8) circle (3pt);
\node at (5.7,-0.75) {\large $m$};
\node at (5.95,-2.25) {\large $m$};

\end{tikzpicture}
\caption{When adding all edges from vertices of label $a$ to vertices of label $b$, we need to determine whether a new closed walk containing two vertices of maximum rank is created. 
To this end, we maintain the information $M[A,B;m]$, which records the number of vertices of rank $m$ on a path from a vertex with out-label set $B$ to a vertex with in-label set $A$, capped at $2$.
As illustrated in the figure, if there are two such vertices with paths $P$ and $Q$ (not necessarily disjoint), then $P\cup Q$ together with two new edges forms a closed walk having two vertices of maximum value, so it would not be a CW-ranking.}
\label{fig:twomaximum}
\end{figure}

We define a function $M=M_f$ which depends on $f$.
A function $M:2^{[k]}\times 2^{[k]}\times [0,w]\to [0,2]$ is a \emph{multiplicity-profile} of $G_t$ with respect to $f$ if for every $A,B\subseteq [k]$ and every $m\in [0,w]$, 
\[M[A,B;m]=\min\left(2,\abs{\{v\in V(G_t):\text{$f(v)=m$, $I(t,v,f)=A$, and $O(t,v,f)=B$}\}}\right).\]

\Cref{fig:twomaximum} illustrates why it is necessary to store this information. 
When we add edges from vertices of label $a$ to vertices of label $b$, we need to determine whether a new closed walk containing two vertices of maximum rank is created; if such a closed walk is created, then the current function will not be extended to a valid function. 
Thus, we need to store the information about whether there are two vertices where each vertex is a vertex of maximum rank $m$ on some path from a vertex of label $b$ to a vertex of label $a$.

We also explain why it is necessary to consider the in-label set and out-label set of a vertex, rather than recording the labels of the endpoints of a directed path passing through it. See \Cref{fig:keepingendpoints} for an example. In the figure, there is a directed path $P$ from a vertex of label $a$ to a vertex of label $b$, and a directed path $Q$ from a vertex of label $a$ to a vertex of label $c$, where $v$ is a common vertex of $P$ and $Q$ and is the unique vertex of maximum value on each path. Suppose that the label $c$ is relabeled to $b$. Then $P$ and $Q$ become two distinct paths from a vertex of label $a$ to a vertex of label $b$, both having $v$ as their unique maximum-value vertex. Then adding edges from vertices of label $b$ to vertices of label $a$ does not create a closed walk whose maximum value is attained by at least two vertices. To correctly capture such situations, it is natural to keep track of all labels $x$ for which there exists a path from $v$ to a vertex of label $x$. To correctly capture this information, we record the entire out-label set of $v$. By symmetry, we also maintain the in-label set.

\begin{figure}
\centering
\begin{tikzpicture}[
    scale=0.7,
    line width=0.9pt,
    flow/.style={line width=1.1pt, -{Stealth[length=3mm]}},
    node/.style={fill=black, line width=1pt}
]

  \draw[very thick, rounded corners] (0,2.6)  rectangle (3.0,5.4);    
  \draw[very thick, rounded corners] (6.3,5) rectangle (9.9,8);   
  \draw[very thick, rounded corners] (6.3,0) rectangle (9.9,3);  
  
  \node at (1.5,5.8) {\Large $a$};
  \node at (8.1,8.4) {\Large $b$};
  \node at (8.1,3.4) {\Large $c$};

  \draw[flow] (3.0,4.7)
      .. controls (3.6,4.5) and (3.9,4.2) .. (4.3,4.0)
      .. controls (5.3,4.8) and (6.0,6.3) .. (7.4,7.0);
  \draw[flow] (3.0,3.3)
      .. controls (3.6,3.5) and (3.9,3.8) .. (4.3,4.0)
      .. controls (5.3,3.2) and (6.0,1.7) .. (7.4,1.0);

  \draw[node] (4.3,4.0) circle (3pt);
  \node at (4.2,4.5) {\Large $m$};
  \node at (4.8,4) {\Large $v$}; 
  \node at (5,5.5) {\Large $P$}; 
  \node at (5,2.5) {\Large $Q$}; \end{tikzpicture}
\caption{An example describing why we need to store the in-set and out-set of a vertex.}
\label{fig:keepingendpoints}
\end{figure}

Let $\mathcal{M}_t$ be the collection of all functions $M$ that are multiplicity profiles of $G_t$ with respect to some valid function. 
It is straightforward from the definition of $\mathcal{M}_t$ that, for the root $r$ of the directed $k$-expression tree of $G$, the input $(G, \mathbf{T}, w)$ is a \textsc{Yes}-instance if and only if $\mathcal{M}_r \neq \varnothing$.
Thus, it suffices to recursively compute $\mathcal{M}_t$.

\medskip
\noindent
\textbf{Algorithm.}
We compute $\mathcal{M}_t$ using bottom-up dynamic programming. 
In the algorithm, we compute the set $\mathcal{R}_t$ for each node $t$, and in the correctness proof, we show that $\mathcal{R}_t=\mathcal{M}_t$. 
For each case, we take $t$ as the current node.

\smallskip
\noindent
\textbf{Case 1.} $t$ is a leaf in $T$. 

\noindent
Let $\op(t) = i(v)$ for some $i \in [k]$ and $v \in V(G)$.
We branch on all possible ranks $z \in [0,w]$ of $v$.
Since $\mathcal{G}(t)$ consists of a single vertex $v$ with label $i$ and rank $z$, the only path from and to $v$ is the trivial path of length $0$.
Thus, for each $z$, we construct a profile $M:2^{[k]}\times 2^{[k]}\times [0,w]\to [0,2]$ by setting, for every $A, B \subseteq [k]$ and $m \in [0, w]$,
\[
    M[A,B;m]=
    \begin{cases}
        1, &\text{if $A=B=\{i\}$ and $m=z$,}\\
        0, &\text{otherwise,}
    \end{cases}
\]
and add $M$ to $\mathcal{R}_t$.

\smallskip
\noindent
\textbf{Case 2.} $\op(t)=\oplus$.

\noindent
Let $t_1$ and $t_2$ be the two children of $t$.
We iterate over all pairs of records $M_1\in \mathcal{R}_{t_1}$ and $M_2\in \mathcal{R}_{t_2}$.
Since $G_t$ is the disjoint union of $G_{t_1}$ and $G_{t_2}$, every directed path of $G_t$ is contained in $G_{t_1}$ or $G_{t_2}$.
Furthermore, for each vertex $v\in V(G_t)$, its in-label set and out-label set in $G_t$ coincide with those in whichever of $G_{t_1}$ and $G_{t_2}$ contains $v$. 
Thus, for every $A, B \subseteq [k]$ and $m \in [0, w]$, the number of vertices in $G_t$ of rank $m$ with in-label set $A$ and out-label set $B$ is equal to the sum of the corresponding counts in $G_{t_1}$ and $G_{t_2}$. 
We therefore construct a new profile $M$ as the pointwise sum of the two profiles, capped at $2$.
Specifically, for each $A,B\subseteq [k]$ and $m\in [0,w]$, we let
\[M[A,B;m] \coloneqq \min\left(2, M_1[A,B;m]+M_2[A,B;m]\right)\]
and add $M$ to $\mathcal{R}_t$.

\smallskip
\noindent
\textbf{Case 3.} $\op(t)=\rho_{a\to b}$ for distinct $a,b\in [k]$.

\noindent
Let $t'$ be the unique child of $t$.
For each $S\subseteq [k]$, we define its \emph{relabeled set} $\rho_{a\to b}(S)$ as $(S\setminus\{a\})\cup\{b\}$ if $a\in S$, and $\rho_{a\to b}(S)=S$ otherwise.
We iterate over each record $M'\in\mathcal{R}_{t'}$.
Since relabeling does not add new edges, the new in-label and out-label sets are updated by applying the operation $\rho_{a\to b}$.
Thus, we construct a new profile $M$ by setting, for each  $A,B\subseteq [k]$ and $m\in [0,w]$,
\[
    M[A,B;m] \coloneqq \min\left(2, \sum_{\substack{A',B'\subseteq [k] \\ \rho_{a\to b}(A')=A \\ \rho_{a\to b}(B')=B}}M'[A',B';m]\right),\]
and add $M$ to $\mathcal{R}_t$.

\smallskip
\noindent
\textbf{Case 4.} $\op(t)=\alpha_{a,b}$ for distinct $a,b\in [k]$.

\noindent
Let $t'$ be the unique child of $t$.
We iterate over all records $M'\in\mathcal{R}_{t'}$.
Before we construct the new profile $M$, we first check whether, by applying the operation $\alpha_{a,b}$, a new closed walk with a non-unique maximum rank is created.
Observe that such closed walk is formed if and only if there are at least two vertices of the same maximum rank, each lying on a directed path in $G_{t'}$ from a vertex of label $b$ to a vertex of label $a$.
Specifically, for each $m \in [0,w]$, we compute
\[
    \mathsf{C}(m) \coloneqq \sum_{\substack{A,B\subseteq [k]\\ b\in A,\ a\in B}}M'[A,B;m]
\]
and discard $M'$ if $\mathsf{C}(m) \geq 2$ for some $m$.

When $M'$ is not discarded, we construct a new profile to reflect the newly created directed walks.
For each $m\in [0,w]$, we define two sets $R^\downarrow_a(m)$ and $R^{\uparrow}_b(m)$ as follows:
\begin{align*}
    R^\downarrow_a(m)&=\bigcup\left\{L\subseteq [k]:\text{there are $A\subseteq [k]$ and $z\le m$ such that $M'[L,A;z]\ge 1$ and $a\in A$}\right\};\\
    R^\uparrow_b(m)&=\bigcup\left\{L\subseteq [k]:\text{there are $B\subseteq [k]$ and $z\le m$ such that $M'[B,L;z]\ge 1$ and $b\in B$}\right\}.
\end{align*}
The set $R^\downarrow_a(m)$ is the union of labels which can reach label $a$ by using paths of maximum rank at most $m$, and the set $R^\uparrow_b(m)$ is the union of labels which can be reached from label $b$ by using paths of maximum rank at most $m$.
For each $A',B'\subseteq [k]$ and $m\in [0,w]$, we define $\mathsf{E}(A',B',m)$ as the pair $(A^*,B^*)$ of subsets of $[k]$, where
\[A^*=\begin{cases}
    A'\cup R^\downarrow_a(m),&\text{if $b\in A'$,}\\
    A', &\text{otherwise,}
\end{cases}
\quad\text{and}\quad
B^*=\begin{cases}
    B'\cup R^\uparrow_b(m),&\text{if $a\in B'$,}\\
    B', &\text{otherwise.}
\end{cases}\]
We then construct a new profile $M$ by setting, for every $A,B\subseteq [k]$ and $m\in [0,w]$,
\[M[A,B;m]=\min\left(2, \sum_{\substack{A',B'\subseteq [k]\\ \mathsf{E}(A',B',m)=(A,B)}}M'[A',B';m]\right),\]
and add $M$ to $\mathcal{R}_t$.
We refer to~\Cref{fig:Example1} for an illustration.
\begin{figure}[t]
    \centering
    \includegraphics[scale=1.0]{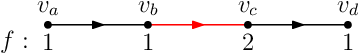}
    \caption{For each vertex $v_i$, let $\lab(v_i)=i$. The digraph is obtained by applying the operation $\alpha_{b,c}$ to the base digraph consisting of the black edges. (The added edge $(v_b,v_c)$ is drawn in red.) Let $f$ be the guessed ranking function. 
    Then for $m=2$, we have $M'[\{c\},\{c,d\};2]=1$ and all the other values are $0$. 
    We compute the reachability sets $R^{\downarrow}_b(2)=\{a,b\}$ and $R^{\uparrow}_c(2)=\{c,d\}$. 
    Thus, for $A'=\{c\}$ and $B'=\{c,d\}$, we have $\mathsf{E}(A',B',2)=(A^*,B^*)$, where $A^*=A'\cup R^{\downarrow}_b=\{a,b,c\}$ and $B^*=B'=\{c,d\}$, since $c\in A'$ and $b\notin B'$. 
    Therefore, $M[\{a,b,c\},\{c,d\};2]=1$ and $M[A,B;2]=0$ for the other subsets $A,B$.}
    \label{fig:Example1}
\end{figure}

\medskip
\noindent
\textbf{Correctness.} 
We prove that $\mathcal{R}_t = \mathcal{M}_t$ for every $t$ by bottom-up induction on the directed $k$-expression tree $\mathbf{T}$ of $G$.

For each node $t$ in $T$ and a valid function $h : V(G_t) \to [0, w]$, let us denote by $M_{t, h}$ the multiplicity-profile of $G_t$ with respect to $h$.

\smallskip
\noindent
\textbf{Base case.} $t$ is a leaf in $T$.

\noindent
Let $\op(t) = i(v)$ for some $i \in [k]$ and $v \in V(G)$.
To show that $\mathcal{R}_t \subseteq \mathcal{M}_t$, take $M \in \mathcal{R}_t$.
Let $z \in [0, w]$ be the guessed rank of $v$ when constructing $M$, and let $h : V(G_t) \to [0, w]$ be the function defined by $h(v) = z$.
Then $h$ is valid since $G_t$ has no edge.
Furthermore, since the only path in $G_t$ is the trivial path at $v$, its in-label set and out-label set are $\{i\}$.
This shows that $M_{t, h} = M$ and so $M \in \mathcal{M}_t$.
Conversely, if $h^*$ is a valid CW-ranking of $G_t$ with $h^*(v)=z'$, then the algorithm generates the corresponding profile $M$ for the rank $z'$ and the label $i$, which is exactly $M_{t, h^*}$. 
Therefore, it holds that $\mathcal{R}_t = \mathcal{M}_t$.

\smallskip
\noindent
\textbf{Inductive step 1.} $\op(t)=\oplus$. 

\noindent
Let $t_1$ and $t_2$ be the two children of $t$ in $T$.
Then $G_t$ is the disjoint union of $G_{t_1}$ and $G_{t_2}$.
We first show that $\mathcal{R}_t \subseteq \mathcal{M}_t$.
Take $M \in \mathcal{R}_t$.
Then $M$ is obtained by the pointwise summation of some $M_1 \in \mathcal{R}_{t_1}$ and $M_2 \in \mathcal{R}_{t_2}$.
By the induction hypothesis, for each $i \in [2]$, there is a valid function $h_i : V(G_{t_i}) \to [0, w]$ such that $M_i = M_{t_i, h_i}$.
Let $h :V(G_t) \to [0, w]$ be a function such that $h \rvert_{V(G_{t_i})} = h_i$ for each $i$.

We claim that $M = M_{t, h}$.
Since there are no edges between $G_{t_1}$ and $G_{t_2}$, the function $h$ of $G_t$ is valid, and for each $v\in V(G_t)$, its in-label set and out-label set in $G_t$ match those in the corresponding subdigraphs obtained from $t_1$ and $t_2$.
Since the subdigraphs are disjoint, the total number of vertices with rank $m$, in-label set $A$, and out-label set $B$ in $G_t$ is the sum of the number of such vertices in $G_{t_1}$ and $G_{t_2}$, capped at $2$.
Thus, $M_{t, h}$ is the pointwise addition of $M_1$ and $M_2$ capped at $2$, which matches~$M$.
This shows that $\mathcal{R}_t \subseteq \mathcal{M}_t$.

To show the converse, take $M_{t, h^*} \in \mathcal{M}_t$ for some valid function $h^* : V(G_t) \to [0, w]$.
Then its restrictions $h^*_1=h^*\vert_{V(G_{t_1})}$ and $h^*_2 = h^*\vert_{V(G_{t_2})}$ are valid on $G_{t_1}$ and $G_{t_2}$, respectively.
By the induction hypothesis, we have $M_{t_1, h_1^*} \in \mathcal{R}_{t_1}$ and $M_{t_2, h_2^*} \in \mathcal{R}_{t_2}$, so the algorithm includes the  pointwise sum of $M_{t_1, h_1^*}$ and $M_{t_2, h_2^*}$ in $\mathcal{R}_t$.
Since every directed path in $G_t$ is contained in one of $G_{t_1}$ and $G_{t_2}$, for every $i \in [2]$ and $v \in V(G_{t_i})$, it holds that 
\[
    I(t,v,h^*)=I(t_i,v,h_i^*)
    \quad\text{and}\quad
    O(t,v,h^*)=O(t_i,v,h_i^*).
\]
This shows that the pointwise sum of $M_{t_1, h_1^*}$ and $M_{t_2, h_2^*}$ is equal to $M_{t,h^*}$, and so $M_{t, h^*} \in \mathcal{R}_t$.
Therefore, we conclude that $\mathcal{R}_t = \mathcal{M}_t$. 

\smallskip
\noindent
\textbf{Inductive step 2.} $\op(t)=\rho_{a\to b}$ for distinct $a, b \in [k]$.

\noindent
Let $t'$ be the child of $t$ in $T$.
Then $\mathcal{G}(t) = \rho_{a \to b} (\mathcal{G}(t'))$ and, in particular, we have $G_t = G_{t'}$.
First, to show that $\mathcal{R}_t \subseteq \mathcal{M}_t$, take $M \in \mathcal{R}_t$.
Then there is $M' \in \mathcal{R}_{t'}$ such that $M$ is obtained from $M'$ by applying the procedure in Step~3.
By the induction hypothesis, there is a valid function $h : V(G_{t'}) \to [0, w]$ such that $M' = M_{t', h}$.
Observe that, since $E(G_t) = E(G_{t'})$, the function $h$ is also valid for $G_t$.

We claim that $M = M_{t, h}$.
Since the operation $\rho_{a \to b}$ only changes the label of some vertices in $G_{t'}$ and does not change the edge set, we have
\[
    I(t,v,h)=I(t',v,h)
    \quad \text{and} \quad
    O(t,v,h)=O(t',v,h)
\]
for each $v \in V(G_t)$.
Thus, for every $A, B \subseteq [k]$ and $m \in [0, w]$, the vertices of rank $m$ with in-label set $A$ and out-label set $B$ in $G_t$ are precisely the vertices of rank $m$ whose in-label set $A'$ and out-label set $B'$ in $G_{t'}$ satisfy $\rho_{a \to b}(A') = A$ and $\rho_{a \to b}(B') = B$.
This shows that
\[
    M_{t, h}[A, B;m] = \min\left(2, \sum_{\substack{A', B' \subseteq [k] \\ \rho_{a \to b}(A') = A \\ \rho_{a \to b} (B') = B}}  M'[A', B';m]\right).
\]
Since $M' = M_{t', h}$, the right-hand side is exactly the value assigned to $M[A,B;m]$ by Step~3. This shows that $M=M_{t,h}$ and so
$M \in \mathcal{M}_t$, implying that $\mathcal{R}_t \subseteq \mathcal{M}_t$.

To show the converse, take $M_{t, h^*} \in \mathcal{M}_t$ for some valid function $h^* : V(G_t) \to [0, w]$.
Again, since $E(G_t) = E(G_{t'})$, the function $h^*$ is a valid function for $G_{t'}$, so the induction hypothesis shows that $M_{t', h^*} \in \mathcal{R}_{t'}$.
If we consider the profile $M^*$ obtained by applying the procedure in Step~3 to $M_{t', h^*}$, the same argument as above shows that $M^* = M_{t, h^*}$.
It therefore follows that $M_{t, h^*} \in \mathcal{R}_t$ and so $\mathcal{M}_t \subseteq \mathcal{R}_t$.

\smallskip
\noindent
\textbf{Inductive step 3.} $\op(t)=\alpha_{a,b}$ for distinct $a, b \in [k]$. 

Let $t'$ be the child of $t$ in $T$.
Then $G_t$ is obtained from $G_{t'}$ by adding all possible edges from vertices of label $a$ to vertices of label $b$ in $\lab_{t'}$.

\begin{claim}\label{alpha_update}
A valid function $h : V(G_{t'}) \to [0, w]$ for $G_{t'}$ is valid for $G_t$ if and only if for every $m \in [0, w]$, there is at most one vertex $v$ such that 
\[
    h(v) = m, \quad 
    b \in I(t', v, h), \quad 
    \text{and} \quad 
    a \in O(t', v, h).
\]
\end{claim}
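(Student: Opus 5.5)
We prove both directions separately; the key tool is the decomposition of a closed walk in $G_t$ that uses new edges into segments that are paths (or walks) in $G_{t'}$ running from a vertex of label $b$ to a vertex of label $a$.

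\emph{Necessity.} Suppose there are two distinct vertices $u,v$ with $h(u)=h(v)=m$, $b\in I(t',u,h)\cap I(t',v,h)$ and $a\in O(t',u,h)\cap O(t',v,h)$. By definition, there are directed paths in $G_{t'}$ from some $x_u\in V(t',b)$ to $u$ and from $u$ to some $y_u\in V(t',a)$, all of whose vertices have rank at most $m$; likewise $x_v,y_v$ for $v$. The edges $y_ux_v$ and $y_vx_u$ are present in $G_t$ because they run from label $a$ to label $b$. Concatenating gives the closed walk $x_u\to u\to y_u\to x_v\to v\to y_v\to x_u$ in $G_t$, whose maximum rank is $m$, attained at both $u$ and $v$. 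So $h$ is not valid for $G_t$. (This is the situation of \Cref{fig:twomaximum}.)

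\emph{Sufficiency.} Assume the uniqueness condition holds and let $C$ be a closed walk in $G_t$ with maximum rank $m$. If $C$ uses no new edge, it is a closed walk of $G_{t'}$, and validity of $h$ on $G_{t'}$ gives a unique maximum. Otherwise, cut $C$ at its new edges. This writes $C$ cyclically as $W_1e_1W_2e_2\cdots W_se_s$, where each $e_i$ is a new edge and each $W_i$ is a (possibly trivial) walk in $G_{t'}$ that starts at the head of $e_{i-1}$ (label $b$) and ends at the tail of $e_i$ (label $a$).

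Let $u$ be any vertex of $C$ with $h(u)=m$; it lies on some $W_i$. The subwalk of $W_i$ from its start to $u$ contains a directed path from $V(t',b)$ to $u$ using only vertices of rank at most $m$. Similarly, the subwalk from $u$ to the end of $W_i$ contains a directed path from $u$ to $V(t',a)$. Hence $b\in I(t',u,h)$ and $a\in O(t',u,h)$. By the hypothesis, all occurrences of rank $m$ on $C$ are therefore the same vertex, so the maximum of $C$ is attained at a unique vertex.

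\emph{Main subtlety.} Walks may repeat vertices, so we must extract paths from walk segments; this follows from the standard fact that every $x$--$y$ walk contains an $x$--$y$ path on a subset of its vertices, which preserves the rank bound. The condition concerns distinct vertices, so a vertex revisited by $C$ causes no problem, since uniqueness in the CW-ranking definition refers to vertices, not occurrences.
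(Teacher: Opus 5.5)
Your proof is correct and follows essentially the same argument as the paper: two qualifying vertices of rank $m$ are glued via new edges from label $a$ to label $b$ into a closed walk with two maxima, and conversely every maximum-rank vertex of a closed walk that uses new edges lies on a $G_{t'}$-segment from a label-$b$ vertex to a label-$a$ vertex, hence satisfies $b\in I(t',\cdot,h)$ and $a\in O(t',\cdot,h)$. You treat these segments as walks and extract paths from them, which is slightly more careful than the paper, where the segments are called paths.
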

\begin{clproof}
We prove the contrapositive of the statement.
First, suppose that there are two distinct vertices $x, y \in V(G_t)$ of rank $m$ satisfying the desired conditions.
Let $P_x$ and $P_y$ be directed paths in $G_{t'}$ from a vertex of label $b$ to a vertex of label $a$ that contain $x$ and $y$, respectively.
Since $\mathcal{G}(t) = \alpha_{a, b}(\mathcal{G}(t'))$, by concatenating $P_x$, $P_y$, and newly created edges, we obtain a directed closed walk in $G_t$ whose maximum rank is $m$ and contains both $x$ and $y$.
Thus, the function $h$ is not valid for $G_t$.

To show the converse, suppose that $h$ is not valid for $G_t$.
Let $W$ be a directed closed walk in $G_t$ that contains at least two distinct vertices of maximum rank under $h$, and let $m$ be the maximum rank of the vertices in $W$.
Since $h$ is valid for $G_{t'}$, the walk $W$ contains at least one newly created edge in $G_t$.
Now, let $P_1, \ldots, P_s$ be the segments of $W$ obtained by removing the edges in $E(G_t) \setminus E(G_{t'})$.
Then each $P_i$ is a directed path in $G_{t'}$ from a vertex of label $b$ to a vertex of label $a$, and all vertices in $P_i$ are of rank at most $m$.

Let $x \in V(W)$ be a vertex of rank $m$, and let $P_i$ be the segment of $W$ that contains $x$.
Since $P_i$ starts at a vertex of label $b$, the subpath of $P_i$ from the first vertex to $x$ shows that $b \in I(t', x, h)$.
Similarly, since $P_i$ ends at a vertex of label $a$, the subpath of $P_i$ from $x$ to the last vertex shows that $a \in O(t', x, h)$.
Thus, every vertex in $W$ of maximum rank satisfies the desired conditions.
Since $W$ contains at least two vertices of rank $m$, it therefore follows that there are at least two vertices satisfying the desired conditions.
This proves~\Cref{alpha_update}.
\end{clproof}

We first show that $\mathcal{R}_t \subseteq \mathcal{M}_t$.
Take $M \in \mathcal{R}_t$ and let $M' \in \mathcal{R}_{t'}$ to which we applied the procedure in Step~4 to obtain $M$.
By the induction hypothesis, there is a valid function $h : V(G_{t'}) \to [0, w]$ such that $M' = M_{t', h}$.
Since $M'$ is not discarded at the first step, we have
\[
    \sum_{\substack{A,B\subseteq [k]\\ b\in A,\ a\in B}}M'[A,B;m] \leq 1
\]
for every $m \in [0, w]$.
This means that for every $m$, there is at most one vertex $v$ satisfying $h(v) = m$, $b \in I(t', v, h)$, and $a \in O(t', v, h)$.
Thus,~\Cref{alpha_update} shows that $h$ is valid for $G_t$.

It remains to show that $M = M_{t, h}$.
Let us begin with the following claim:

\begin{claim}\label{alpha_update2}
Let $m \in [0,w]$, let $v \in V(G_t)$ such that $h(v) = m$, and let $A' = I(t', v, h)$ and $B' = O(t', v, h)$.
It holds that $A^* = I(t, v, h)$ and $B^* = O(t, v, h)$, where $(A^*, B^*) = \mathsf{E}(A', B', m)$.
\end{claim}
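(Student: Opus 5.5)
The plan is to prove the two equalities by symmetric arguments, so I treat only $A^* = I(t,v,h)$ and obtain $B^*=O(t,v,h)$ by reversing all edges and swapping the roles of $a$ and $b$. Two facts are used throughout. First, labels do not change under $\alpha_{a,b}$, so $V(t,\ell)=V(t',\ell)$ for every $\ell$. Second, $G_{t'}$ is a subdigraph of $G_t$, so $A' \subseteq I(t,v,h)$ always holds. I also record what $R^\downarrow_a(m)$ means in terms of $h$. Since $M'=M_{t',h}$, a label $\ell$ lies in $R^\downarrow_a(m)$ if and only if some vertex $u$ with $h(u)\le m$ has $\ell\in I(t',u,h)$ and $a\in O(t',u,h)$. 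Concatenating the two witnessing paths through $u$ gives a walk, which we shorten to a path. Conversely, the maximum-rank vertex of any path serves as a witness. Hence $\ell \in R^\downarrow_a(m)$ if and only if $G_{t'}$ contains a directed path from $V(t',\ell)$ to $V(t',a)$ all of whose vertices have rank at most $m$.

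\emph{Inclusion $A^*\subseteq I(t,v,h)$.} Only the case $b\in A'$ needs work. Take $\ell\in R^\downarrow_a(m)$ and fix a path $P$ from label $\ell$ to some vertex $x$ of label $a$ with ranks at most $m$. Since $b\in A'$, there is a path $Q$ in $G_{t'}$ from some vertex $y$ of label $b$ to $v$ with ranks at most $m$. The edge $xy$ is present in $G_t$, so $P\,xy\,Q$ is a walk from label $\ell$ to $v$ with all ranks at most $m$. Shortening it to a path keeps the rank bound, so $\ell\in I(t,v,h)$.

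\emph{Inclusion $I(t,v,h)\subseteq A^*$.} Let $\ell \in I(t,v,h)$ and let $P$ be a path in $G_t$ from $V(t,\ell)$ to $v$ with ranks at most $m$.
\begin{itemize}
\item If $P$ uses no new edge, then $\ell\in A'$.
\item Otherwise, look at the \emph{last} new edge $xy$ on $P$, so $\lab(x)=a$ and $\lab(y)=b$. The suffix of $P$ from $y$ to $v$ lies in $G_{t'}$, which shows $b\in A'$.
\item Now look at the \emph{first} new edge $x_1y_1$ on $P$. The prefix from the start of $P$ to $x_1$ lies in $G_{t'}$ and ends at label $a$, so by the characterisation above $\ell\in R^\downarrow_a(m)$.
\end{itemize}
In every case $\ell\in A^*$. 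Choosing the first new edge rather than the last is the key point: it avoids having to iterate the closure. This is why a single union with $R^\downarrow_a(m)$ suffices, even though $P$ may pass through several new edges.

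The main obstacle is exactly this possible multiplicity of new edges on $P$, together with making sure the relevant rank threshold is $m=h(v)$. Both are handled by splitting $P$ at its first and last new edges, and by noting that every vertex on $P$ has rank at most $m$, so each subpath inherits the bound. A secondary subtlety is that $R^\downarrow_a(m)$ is computed from the profile rather than from paths. This is resolved by the characterisation in the first paragraph, which uses that every label has nonempty $V(t',\ell)$ whenever it appears in an in-label set, and that the profile counts every vertex of each rank.
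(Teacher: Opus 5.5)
Your proof is correct and follows essentially the same route as the paper. Like the paper, you split a witnessing path at its first and last newly added edges to get $I(t,v,h)\subseteq A^*$, concatenate through a new edge for the reverse inclusion, and obtain $B^*$ by symmetry. Your explicit derivation, from $M'=M_{t',h}$, that $\ell\in R^\downarrow_a(m)$ exactly when there is a rank-$\le m$ path from label $\ell$ to label $a$ is a small but welcome addition, since the paper leaves this step implicit.
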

\begin{clproof}
Since the proof for each equality is symmetric, we only prove that $A^* = I(t, v, h)$.
Clearly, we have $A' \subseteq I(t, v, h)$.
Suppose that there is a label $\ell \in I(t, v, h) \setminus A'$.
Then there is a directed path $P$ in $G_t$ from a vertex of label $\ell$ to $v$ whose maximum rank is $m$ and contains an edge in $E(G_t) \setminus E(G_{t'})$.
Now, consider the first and the last newly added edges, say $e$ and $e'$, respectively, in $P$.
The section before $e$ is a path in $G_{t'}$ from label $\ell$ to a vertex of label $a$, so $\ell \in R_a^\downarrow(m)$. 
Similarly, the part after $e'$ is a path in $G_{t'}$ from a vertex of label $b$ to $v$, so $b \in A'$.
Thus, every label in $I(t, v, h) \setminus A'$ belongs to $R_a^\downarrow(m)$, and such new labels occur only when $b \in A'$.

Conversely, if $b \in A'$ and $\ell \in R_a^\downarrow(m)$, then in $G_{t'}$, there is a directed path from a vertex of label $\ell$ to a vertex of label $a$ and a directed path from a vertex of label $b$ to $v$, both of which have maximum rank at most $m$.
By concatenating these paths with a newly created edge in $G_t$, we obtain a directed path in $G_t$ from a vertex of label $\ell$ to $v$ whose maximum rank is $m$.
These arguments show that
\[
    I(t, v, h) = 
    \begin{cases}
         A' \cup R_a^\downarrow(m), & b \in A', \\
         A', & \text{otherwise,}
    \end{cases}
\]
showing that $A^* = I(t, v, h)$.
This proves~\Cref{alpha_update2}.
\end{clproof}

By~\Cref{alpha_update2}, for each $A, B \subseteq [k]$ and $m \in [0, w]$, a vertex is  counted in $M_{t, h}[A, B;m]$ if and only if its in-label set $A'$ and out-label set $B'$ in $G_{t'}$ satisfy $\mathsf{E}(A', B', m) = (A, B)$.
Since $M' = M_{t', h}$, we have
\[
    M_{t, h}[A, B;m]
    = \min \left(2, \sum_{\substack{A', B' \subseteq [k] \\ \mathsf{E}(A', B', m) = (A, B)}} M'[A', B';m]\right).
\]
This shows that $M_{t, h} = M$ and so $M \in \mathcal{M}_t$.
Thus, we have $\mathcal{R}_t \subseteq \mathcal{M}_t$.

To show the converse, fix a valid function $h^* : V(G_t) \to [0, w]$ and consider $M_{t, h^*} \in \mathcal{M}_t$.
Since $G_{t'}$ is a subdigraph of $G_t$, the function $h^*$ is also valid for $G_{t'}$, so by the induction hypothesis, we have $M_{t', h^*} \in \mathcal{R}_{t'}$.
Observe that, by~\Cref{alpha_update} applied to $h^*$, the profile $M_{t', h^*}$ is not discarded at the first phase of Step~4.

Let $M^*$ be the profile obtained from $M_{t', h^*}$ by applying the procedure in Step~4.
By the same argument as in~\Cref{alpha_update2}, for each vertex $v$ of rank $m$, its in-label set and out-label set in $G_t$ are obtained from its in-label and out-label sets in $G_{t'}$ by applying $\mathsf{E}(\cdot,\cdot,m)$.
By a similar argument above, it therefore follows that $M^* = M_{t, h^*}$.
Thus, we have $M_{t, h^*} \in \mathcal{R}_t$, so $\mathcal{M}_t \subseteq \mathcal{R}_t$.

\medskip

Since we have considered all possible cases, we conclude that $\mathcal{R}_t = \mathcal{M}_t$ for every $t \in V(T)$. 
In particular, it holds that $\mathcal{R}_r = \mathcal{M}_r$ at the root $r$ of $\mathbf{T}$.
This proves the correctness of the algorithm.

\medskip
\noindent
\textbf{Time complexity.}
Let $S \coloneqq 3^{(w+1) \cdot 4^k}$.
For each $t \in V(T)$, since every member of $\mathcal{R}_t$ is a function from $2^{[k]}\times 2^{[k]}\times [0,w]$ to $[0, 2]$, we have $\lvert \mathcal{R}_t \rvert \leq S$.
Now, we analyze the running time at each node $t$ of the tree $T$.
\begin{itemize}
    \item $t$ is a leaf: The algorithm iterates over $w+1$ possible ranks.
    For each rank, initializing a profile takes time $\mathcal{O}(4^k (w+1))$.
    Thus, the algorithm runs in time $\mathcal{O}(4^k w^2)$.
    \item $\op(t) = \oplus$: The algorithm iterates over at most $S^2$ pairs of profiles from the children.
    For each pair, the pointwise addition takes time $\mathcal{O}(4^k(w+1))$.
    Thus, the algorithm runs in time $\mathcal{O}(S^2\cdot 4^k(w+1))=\mathcal{O}(9^{(w+1)4^k})$.
    \item $\op(t) = \rho_{a\to b}$ for distinct $a, b \in [k]$: The algorithm iterates over at most $S$ profiles.
    For each profile, the summation takes time $\mathcal{O}(4^k(w+1))$.
    Thus, the algorithm runs in time $\mathcal{O}(S\cdot 4^k(w+1))=\mathcal{O}(3^{(w+1)4^k})$.
    \item $\op(t) = \alpha_{a,b}$ for distinct $a, b \in [k]$: The algorithm iterates over at most $S$ profiles.
    For each profile, computing $\mathsf{C}(m)$ takes time $\mathcal{O}(4^k)$.
    Determining $R^{\downarrow}_a(m)$ and $R^\uparrow_b(m)$ takes $\mathcal{O}(2^k (w+1))$ time.
    Constructing a new profile takes time $\mathcal{O}(4^k (w+1))$.
    Thus, the algorithm runs in time $\mathcal{O}(S\cdot 4^{k}(w+1))=\mathcal{O}(3^{(w+1)4^k})$.
\end{itemize}
The total running time at each node is dominated by the disjoint union operation.
Since the directed $k$-expression tree $\mathbf{T}$ has $\mathcal{O}(k^2 \cdot n)$ nodes, the total running time of the algorithm is $\mathcal{O}(9^{(w+1)4^k}\cdot n)$.
This proves the existence of an FPT algorithm that decides if the cycle rank of $G$ is at most $w$ in the desired time.

\medskip
\noindent
\textbf{Constructing the cycle rank decomposition.}
To construct the cycle rank decomposition of $G$, we construct the auxiliary digraph $\mathcal{A}$ representing the dependency between valid profiles.
The vertex set of $\mathcal{A}$ is the set of pairs $(t,M)$ such that $t\in V(T)$ and $M\in\mathcal{R}_t$.
For each internal node $t\in V(T)$ and each $M\in\mathcal{R}_t$, we add edges in $\mathcal{A}$ depending on how $M$ is generated in the algorithm:
\begin{itemize}
    \item If $\op(t)=\oplus$, then $t$ has two children $t_1$ and $t_2$ and $M$ is generated from some valid profiles $M_1\in \mathcal{R}_{t_1}$ and $M_2\in \mathcal{R}_{t_2}$.
    In this case, we add edges $((t,M),(t_1,M_1))$ and $((t,M),(t_2,M_2))$.
    \item If $\op(t)$ is either $\rho_{a\to b}$ or $\alpha_{a,b}$, then $t$ has one child $t'$ and $M$ is generated from some valid profile $M'\in\mathcal{R}_{t'}$.
    In this case, we add edge $((t,M),(t',M'))$.
\end{itemize}

If $(G,w)$ is a \yes-instance, then we have $\mathcal{R}_r\neq \varnothing$.
We take any valid profile $M^*\in \mathcal{R}_r$ and consider a subdigraph consisting of the reachable vertices in $\mathcal{A}$ starting from $(r,M^*)$.
By the construction of $\mathcal{A}$, the underlying graph of this subdigraph forms a tree isomorphic to $T$.
For each leaf $t$ of $T$ with $\op(t)=i_t(v_t)$, one pair $(t, M_t)$ is reached from $(r,M^*)$.
Since $M_t$ is generated by guessing a rank $z\in [0,w]$ for the vertex $v_t$ in Case 1, we define $h(v_t)\coloneqq z$.
Since the path in $\mathcal{A}$ represents a sequence of valid computations, the resulting function $h:V(G)\to [0,w]$ is a CW-ranking of $G$.
By~\cref{CW_ranking_equivalence}, one can compute the cycle rank decomposition from the CW-ranking $h$ in time $\mathcal{O}(w \cdot (n+m))$ where $m\coloneqq \abs{E(G)}$.
Thus, the total running time for constructing a cycle rank decomposition is $\mathcal{O}(9^{(w+1)4^k}\cdot (n+m))$.
\end{proof}

\section{An FPT algorithm for semi-complete digraphs}\label{sec:semicomplete}

In this section, we prove~\Cref{thm:main1}, which we restate for the reader's convenience.

\mainonethm*

Let us begin with some definitions.
A \emph{directed path-decomposition} of a digraph $G$ is a sequence $(X_1, \ldots, X_m)$ of vertex subsets of $G$, which are called the \emph{bags}, satisfying the following properties:
\begin{itemize}
    \item $\bigcup_{i=1}^m X_i = V(G)$.
    \item For each $1 \leq i < j < k \leq m$, it holds that $X_i \cap X_k \subseteq X_j$.
    \item For each $(u, v) \in E(G)$, there exist $1 \leq i \leq j \leq m$ such that $u\in X_i$ and $v\in X_j$.
\end{itemize}
The \emph{width} of a directed path-decomposition $(X_1, \ldots, X_m)$ is $\max_{i \in [m]} \lvert X_i \rvert  -1$.
The \emph{directed path-width} of $G$, denoted by $\dpw(G)$, is the minimum width among all directed path-decompositions of~$G$.

\begin{proposition}[Lemma 10 in~\cite{Gruber2012}]\label{prop:dpwcr}
For every digraph $G$, it holds that $\dpw(G) \leq \crank(G)$.
\end{proposition}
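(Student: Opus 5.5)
We prove $\dpw(G)\le\crank(G)$ by induction on $\abs{V(G)}$, building a directed path-decomposition from a cycle rank decomposition. We use the recursive definition of cycle rank from \Cref{subsection:cyclerank} and aim for the stronger statement that $G$ has a directed path-decomposition of width at most $\crank(G)$. If $G$ is acyclic, take a topological order $v_1,\dots,v_n$ and the bags $X_i=\{v_i\}$. Every edge $(v_i,v_j)$ has $i<j$, so the edge condition holds with $u\in X_i$ and $v\in X_j$, and the width is $0$.

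Suppose $G$ is not strongly connected. Order its strongly connected components $C_1,\dots,C_p$ topologically, so that every edge between components goes from $C_i$ to $C_j$ with $i<j$. By induction each $C_i$ has a directed path-decomposition of width at most $\crank(C_i)\le\crank(G)$. Concatenate these decompositions in the order $C_1,\dots,C_p$. The interpolation property holds because the vertex sets are disjoint and each block is consecutive. Edges inside a component are handled by its own decomposition, and an edge from $C_i$ to $C_j$ with $i<j$ goes from an earlier bag to a later one, as required.

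Suppose $G$ is strongly connected with $\abs{V(G)}\ge2$. Pick $v$ with $\crank(G-v)=\crank(G)-1$. By induction $G-v$ has a decomposition $(Y_1,\dots,Y_q)$ of width at most $\crank(G)-1$. Set $X_i=Y_i\cup\{v\}$. Then $v$ lies in every bag, so the interpolation condition holds for $v$ and continues to hold for all other vertices. Every edge incident to $v$ can be placed inside a single bag, so the edge condition is satisfied. The width grows by exactly one, giving $\dpw(G)\le\crank(G)$.

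The only delicate point is the direction convention in the edge condition: an edge $(u,v)$ must have $u$ in an earlier-or-equal bag than $v$. Hence the components must be concatenated in forward topological order, and the acyclic base case must use a topological order rather than its reverse. Alternatively, the base case follows from the concatenation case applied to single-vertex components.
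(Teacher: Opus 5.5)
Your proof is correct. The induction on $|V(G)|$ follows the recursive definition of $\crank$: you concatenate decompositions of the strongly connected components in topological order, and in the strongly connected case you add the deleted vertex to every bag. It checks all three axioms, including the paper's forward-edge convention that an edge $(x,y)$ needs $x\in X_i$, $y\in X_j$ with $i\le j$.

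The paper does not prove this statement; it simply imports it as Lemma~10 of Gruber. Your argument is a complete, self-contained proof that can stand in for that citation.
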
  

Although such a decomposition is not part of the input, we can compute one, since there is an FPT algorithm that computes a directed path-decomposition of small width:

\begin{theorem}[Theorem 4.13 in~\cite{fominpilipczuk19dpw}]\label{thm:dpwsemi}
There is an algorithm that, given a semi-complete digraph $G$ on $n$ vertices and an integer $w \geq 0$, either computes a directed path-decomposition of width at most $w$ or correctly concludes that no such path-decomposition exists in time $2^{\mathcal{O}(w \log w)} \cdot n^2$.
\end{theorem}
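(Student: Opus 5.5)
This result is imported from Fomin and Pilipczuk, and the paper does not reprove it. If I had to prove it, I would follow the outdegree-ordering approach that is standard for semi-complete digraphs.

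First, reformulate a directed path-decomposition of width at most $w$ as a chain of separations. These are pairs $(A_0,B_0)\preceq (A_1,B_1)\preceq\cdots$ with $A_i\cup B_i=V(G)$, no edge from $B_i\setminus A_i$ to $A_i\setminus B_i$, and $|A_i\cap B_i|\le w+1$. The chain starts at $(\varnothing,V(G))$, ends at $(V(G),\varnothing)$, and consecutive separations differ by moving one vertex across. Taking the separators, or the unions of consecutive ones, gives the bags, and conversely every decomposition yields such a chain. The problem thus becomes finding a path in a ``separation DAG'' from the trivial separation to the full one.

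The key structural step, and the one I expect to be hardest, is bounding the number of candidate separations. Sort $V(G)$ by outdegree, $v_1,\dots,v_n$. In a semi-complete digraph, a separation $(A,B)$ of order at most $w+1$ has no edges from $B\setminus A$ to $A\setminus B$. Every pair $u\in A\setminus B$, $v\in B\setminus A$ is adjacent, so the edge must go $u\to v$, and hence vertices of $A\setminus B$ tend to have larger outdegree than those of $B\setminus A$. I would prove a quantitative version: if $\deg^+(u)\ge\deg^+(v)+c\,w$ for a suitable constant $c$, then $u$ cannot lie in $B\setminus A$ while $v$ lies in $A\setminus B$. This should follow by counting in-neighbours and out-neighbours across the separator. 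Consequently each separation agrees with a prefix/suffix split of the ordering outside a window of $\mathcal{O}(w)$ consecutive positions. Guessing the split position and the at most $w+1$ separator vertices inside the window yields $2^{\mathcal{O}(w\log w)}\cdot n$ candidate separations in total. These are enumerated explicitly, and a candidate is discarded if it has an edge from $B\setminus A$ to $A\setminus B$.

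Finally, I would build the DAG on the candidate separations, with arcs between pairs that differ by moving a single vertex across while keeping the order at most $w+1$, and search for a source-to-sink path. Arcs are local within the windows, so each candidate has $2^{\mathcal{O}(w\log w)}$ successors. Validity checks can be done incrementally with precomputed degree information, which gives the overall $2^{\mathcal{O}(w\log w)}\cdot n^2$ bound, the $n^2$ coming from reading the semi-complete input. If no path exists, then by the chain characterisation $\dpw(G)>w$. In the paper this theorem is then combined with \Cref{prop:dpwcr}: if no decomposition of width at most $w$ is found, the algorithm rejects; otherwise the Fomin--Pilipczuk conversion gives a directed $(w+2)$-expression, to which \Cref{thm:main} is applied.
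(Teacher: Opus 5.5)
Since the paper only cites this theorem, your sketch has to stand on its own, and its central counting step fails. The outdegree inequality is correct: for $u\in A\setminus B$ and $v\in B\setminus A$ one gets $d^+(u)\ge d^+(v)-\abs{A\cap B}+1$. But it only relates $A\setminus B$ to $B\setminus A$ and says nothing about $A\cap B$. Separator vertices far from the window are genuinely needed.

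Take the transitive tournament on $v_1,\dots,v_n$ (arcs $v_iv_j$ for $i<j$) with the arc between $v_1$ and $v_n$ reversed. It has directed path-width $1$, with bags $\{v_1,v_i\}$ for $i=2,\dots,n$. Here $v_1$ has the largest and $v_n$ the smallest outdegree. Using the arcs $v_nv_1$, $v_1v_i$ and $v_iv_n$, one checks directly that every separation of order at most $w+1<n-2$ with $A\setminus B\neq\varnothing\neq B\setminus A$ has $v_1\in A\cap B$ or $v_n\in A\cap B$. Along any chain, $\abs{A\setminus B}$ grows by at most one per step, so some separation has $\abs{A\setminus B}=\lfloor n/2\rfloor$. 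Its window lies in the middle of the ordering, while one of its separator vertices is $\Omega(n)$ positions away. Hence your candidate family contains no complete chain, and your algorithm would report $\dpw(G)>w$ for a digraph of directed path-width $1$.

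Allowing arbitrary separator vertices does not help. In a transitive tournament, every set of at most $w+1$ vertices together with every split point is a separation, which gives $n^{\Theta(w)}$ candidates. Passing to the vertex-separation formulation, where the separator is computed from the set of introduced vertices rather than guessed, only moves the problem: up to $w$ low-outdegree vertices may be introduced arbitrarily early. What is missing is a structural argument controlling the far part of the separator, whose role is to cover long backward arcs across the current position. For instance, you could combine an obstruction bounding pairwise disjoint long backward arcs (in the spirit of the matching tangles of Fomin and Pilipczuk) with a normalisation of the decomposition that restricts which far vertices may appear.

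There is also a smaller gap before this. The inequality only confines the disagreement to an interval of outdegree \emph{values} of length $O(w)$, and that interval may contain arbitrarily many vertices; in a regular tournament all outdegrees coincide. To get a window of $O(w)$ \emph{positions}, you need a separate lemma: if $\dpw(G)\le w$, then only $O(w)$ vertices can have outdegrees in a common range of length $O(w)$ (no large degree tangle). Your algorithm must then search for such a set and reject if it finds one.
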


We will also use that, for semi-complete digraphs, a directed
path-decomposition can be converted into a directed $k$-expression in
FPT time, as stated in the following lemma:

\begin{lemma}[Lemma A.2 in~\cite{FominP2013}]\label{lem:dpwdcwsemi}
For every semi-complete digraph $G$, we have $\dcw(G)\le \dpw(G)+2$.
Moreover, given an $n$-vertex semi-complete digraph and its directed path-decomposition of width $k$, one can compute a directed $(k+2)$-expression of $G$ in time $\mathcal{O}(n^2)$.
Furthermore, the directed $(k+2)$-expression tree of such directed $(k+2)$-expression has $\mathcal{O}(k)\cdot n$ nodes.
\end{lemma}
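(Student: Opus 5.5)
The plan is to build the expression directly from the path-decomposition by sweeping it from left to right. At each moment, every vertex of the current bag carries its own private label, and all vertices already forgotten share one extra ``dead'' label. The semi-complete structure is what makes a single dead label suffice.

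First I would normalise the decomposition $(X_1,\ldots,X_m)$ of width $k$. For each vertex $v$, let $\mathrm{first}(v)$ and $\mathrm{last}(v)$ be the smallest and largest indices of bags containing $v$; by the second axiom, $v$ lies in exactly the bags with indices in $[\mathrm{first}(v),\mathrm{last}(v)]$. Sweeping $i=1,\ldots,m$, I produce a sequence of events: introduce each vertex with $\mathrm{first}(v)=i$, then forget each vertex with $\mathrm{last}(v)=i$. At any time the set of introduced but not yet forgotten vertices is contained in some $X_i$, so it has size at most $k+1$. Computing this takes $\mathcal{O}(km+n)$ time, and we may assume $m\le n$ after removing redundant bags.

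The key structural observation is as follows. Suppose $u$ is already forgotten when $v$ is introduced, i.e. $\mathrm{last}(u)<\mathrm{first}(v)$. Then $(v,u)\notin E(G)$, because the third axiom would require bags $X_i\ni v$ and $X_j\ni u$ with $i\le j$, which is impossible. Since $G$ is semi-complete, it follows that $(u,v)\in E(G)$. Hence every forgotten vertex has an edge to $v$, and no edge goes back from $v$.

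I then use labels $[k+2]$: label $k+2$ is ``dead'', and labels $1,\ldots,k+1$ are reserved for active vertices. The two events are handled as follows.
\begin{itemize}
\item When $v$ is introduced, the other active vertices number at most $k$, so some label $i\in[k+1]$ is free. I take the disjoint union of the current expression with $i(v)$ and apply $\alpha_{k+2,i}$. Then, for each active vertex $u$ with label $j$, I apply $\alpha_{j,i}$ if $(u,v)\in E(G)$ and $\alpha_{i,j}$ if $(v,u)\in E(G)$.
\item When $u$ with label $j$ is forgotten, I apply $\rho_{j\to k+2}$.
\end{itemize}

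Correctness follows by induction over events. Every edge of $G$ joins $v$ to some vertex $u$ introduced earlier, and $u$ is either active or dead at that moment. Edges to active vertices are inserted exactly as they appear in $G$, thanks to the private labels. Edges from dead vertices are exactly $(u,v)$ by the observation above. No spurious edges arise: the label $i$ is held only by $v$ at that moment, and $\alpha_{k+2,i}$ only adds edges that exist in $G$. The only point to double-check is that an $\alpha$ applied later never re-adds an edge between the wrong pair, which holds since each later $\alpha$ involves the label of a freshly introduced vertex.

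Each introduction contributes $\mathcal{O}(k)$ nodes, giving $\mathcal{O}(k)\cdot n$ nodes in the expression tree, and each forget contributes one node. Adjacency tests via an adjacency matrix keep the total time at $\mathcal{O}(n^2)$. This yields $\dcw(G)\le\dpw(G)+2$. I expect the main obstacle to be the careful bookkeeping in the normalisation, together with the claim that a free label always exists at introduction time.
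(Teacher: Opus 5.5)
Your proof is correct. Private labels $1,\ldots,k+1$ on the active vertices and one shared label $k+2$ on forgotten ones suffice because $\mathrm{last}(u)<\mathrm{first}(v)$ rules out $(v,u)$ by the third axiom, so semi-completeness forces $(u,v)\in E(G)$ and a single $\alpha_{k+2,i}$ adds exactly the right arcs; you read this direction off this paper's forward-arc convention correctly. The paper gives no proof of its own and just cites Lemma~A.2 of Fomin and Pilipczuk; as far as I recall, their argument is essentially this same left-to-right sweep with $k+1$ bag labels plus one label for forgotten vertices.
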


We are now ready to prove our main theorem.

\begin{proof}[Proof of~\Cref{thm:main1}]
Let $G$ be a semi-complete digraph on $n$ vertices and let $w \geq 0$ be an integer.
By~\cref{thm:dpwsemi}, one can compute a directed path-decomposition of width at most $w$, or correctly report that $\dpw(G)>w$ in time $2^{\mathcal{O}(w\log w)}\cdot n^2$.
In the latter case, we directly report that $\crank(G)>w$ by~\cref{prop:dpwcr}.

We may therefore assume that we have obtained a directed path-decomposition of $G$.
Let $k$ denote its width.
Then $k \leq w$.
By~\cref{lem:dpwdcwsemi}, this decomposition can be converted into a  directed $(k+2)$-expression of $G$ in time $\mathcal{O}(n^2)$.
Let $\mathbf{T}$ be the corresponding $(k+2)$-expression tree of $G$.
Then $\mathbf{T}$ has at most $\mathcal{O}(k) \cdot n$ nodes by~\cref{lem:dpwdcwsemi}.

We now apply~\cref{thm:main} to $G$, $\mathbf{T}$, and $w$ to decide if $\crank(G) \leq w$, and if so, to output a corresponding cycle rank decomposition of depth at most $w$.
Since $k \leq w$ and the directed $(k+2)$-expression tree has $\mathcal{O}(k) \cdot n$ nodes and uses $k+2$ labels, the dynamic programming step takes time
\[\mathcal{O}\left(9^{(w+1)\cdot 4^{k+2}}\right)\cdot \mathcal{O}(k\cdot n) \leq 
\mathcal{O}(9^{(w+1)\cdot 4^{w+2}}\cdot n).\]
Since $G$ is a semi-complete digraph, we have $\abs{E(G)}=\mathcal{O}(n^2)$.
Thus, constructing the cycle rank decomposition takes time
\[\mathcal{O}\left(9^{(w+1)\cdot 4^{k+2}}\cdot n\right)+ \mathcal{O}(w\cdot (n+n^2)) \leq 
\mathcal{O}(9^{(w+1)\cdot 4^{w+2}}\cdot n^2).\]
Therefore, the total running time of the entire algorithm is $\mathcal{O}(9^{(w+1)\cdot 4^{w+2}}\cdot n^2)$.
This proves~\Cref{thm:main1}.
\end{proof}

\section{Minimum Feedback Arc Set parameterized by cycle rank}\label{sec:minfas}

In this section, we consider the problem of finding a minimum feedback arc set in a semi-complete digraph. 
A set $F$ of edges of a digraph $G$ is said to be a \emph{feedback arc set} of $G$ if $G-F$ has no directed cycles.

\vskip 0.3cm
\noindent
\fbox{\parbox{0.97\textwidth}{
	\textsc{Minimum Feedback Arc Set}\\
	\textbf{Input:} A digraph $G$\\
	\textbf{Task:} Compute the minimum size of a feedback arc set of $G$.
    }}
\vskip 0.3cm

The following observation shows that \textsc{Minimum Feedback Arc Set} can be formulated as a linear layout problem on the vertices.
Given a digraph $G$ and a linear order $\sigma = (v_1, \ldots, v_n)$ on $V(G)$, we say an edge $(v_i, v_j) \in E(G)$ is a \emph{$\sigma$-backedge} if $i>j$.
Then for every linear order $\sigma$ on $V(G)$, the set of $\sigma$-backedges is always a feedback arc set of $G$.
Moreover, if $F$ is a set of minimum feedback arc set of $G$, then the topological ordering of $G-F$ witnesses exactly the removed backedges.
Thus, the objective of \textsc{Minimum Feedback Arc Set} is to find a linear order $\sigma$ with the minimum number of $\sigma$-backedges.
For $S\subseteq V(G)$, a $\sigma$-backedge $(v_i, v_j)$ is called a \emph{$(\sigma,S)$-backedge} if $\{v_i,v_j\}\cap S\neq \varnothing$.
For a set $S$, let $\Sigma(S)$ denote the set of all linear orders on $S$.

\begin{figure}[t]
    \centering
    \begin{tikzpicture}[
        scale=1.3, 
        font=\small,
        dot/.style={circle, fill=black, inner sep=1.5pt, minimum size=3pt}
    ]
        \node[dot, label=left:{$t$}] (t) at (0,0) {};

        \draw[thick] (t.center) -- (-1.2, -2.0) -- (1.2, -2.0) -- cycle;
        \node at (0, -1.2) {$S_t$};

        \draw[dashed, red] (t.center) -- (2.4, 2.4);

        \node[dot] (a1) at (0.6, 0.6) {};
        \node[dot] (a2) at (1.2, 1.2) {};
        \node[dot] (a3) at (1.8, 1.8) {};
        \node[dot] (a4) at (2.4, 2.4) {};

        \draw[thick] (0.5, 0.7) -- (0.35, 0.85) -- (2.15, 2.65) -- (2.3, 2.5);
        \node[sloped, above=2pt] at (1.25, 1.75) {$A_t$};

    \end{tikzpicture}
    \caption{The set $A_t$ of proper ancestors of a node $t$ and the vertex set $S_t$ of the subtree $T_t$ rooted at $t$.}
    \label{fig:structural_decomposition}
\end{figure}

\thmMFAS*
\begin{proof}
Let $G$ be a semi-complete digraph on $n$ vertices and let $w = \crank(G)$.
We may assume that $G$ is strongly connected, since otherwise, the optimum value of $G$ is equal to the sum of the optimum value of its strongly connected components.
Furthermore, by applying~\Cref{thm:main1}, we can compute a cycle rank decomposition $(T, r)$ of $G$ of depth $w$ in time $\mathcal{O}(9^{(w+1)4^{w+2}} \cdot n^2)$.
Thus, it remains to show that \textsc{Minimum Feedback Arc Set} can be solved in time $n^{\mathcal{O}(w)}$ when such a decomposition is also given.

For each node $t$ of $T$, let $A_t$ be the set of all proper ancestors of $t$, let $d_t = \abs{A_t}$, and let $S_t=V(T_t)$. 
We note that $A_r=\varnothing$ and $\Sigma(A_r)=\{\varnothing\}$ for the root $r$.
Since $T$ has depth $w$, we have $d_t \le w$ for every node $t$.
Additionally, for a node $t$ in $T$, a linear order $\sigma=(a_1,\ldots, a_{d_t})\in \Sigma(A_t)$, and a tuple $\mathbf{b} = (b_0,\ldots, b_{d_t})$ of non-negative integers with $\sum_{i=0}^{d_t} b_i=\abs{S_t}$, we define
    $\mathsf{DP}[t, \sigma, \mathbf{b}]$
as the minimum number of $(\delta,S_t)$-backedges over all possible linear orders $\delta$ of $S_t\cup A_t$ of the form
\[
    \delta=B_0\oplus \{a_1\}\oplus B_1 \oplus \{a_2\} \oplus B_2\oplus \cdots \oplus \{a_{d_t}\}\oplus B_{d_t},
\]
where $B_j$ is a linear order of a subset of $S_t$ with $\abs{B_j} = b_j$ for each $j \in [0, d_t]$.
As $S_r=V(G)$, the minimum size of a feedback arc set in $G$ is equal to $\DP[r, \varnothing, ( \lvert V(G) \rvert)]$.

For intermediate steps, we also compute the following. Let $t$ be an internal node of $T$ with children $t_1, \ldots, t_s$ such that for every $1 \leq p < q \leq s$, the edges between $S_{t_p}$ and $S_{t_q}$ in $G$ are oriented from $S_{t_p}$ to $S_{t_q}$.
Let $U_j = \bigcup_{i \in [j]} S_{t_i}$ for each $j \in [s]$.
For a linear order $\tau=(a_1,\ldots, a_{d_t+1})\in \Sigma(A_t\cup \{t\})$ and a tuple $\mathbf{x} = (x_0,\ldots, x_{d_t+1})$ of non-negative integers with $\sum_{i \in [0,d_t+1]} x_i=\abs{U_j}$, we define
$\mathsf{DP}'[t,j,\tau, \mathbf{x}]$
as the minimum number of $(\delta, U_j)$-backedges over all linear orders $\delta$ of $U_j \cup A_t\cup \{t\}$ of the form
\[
    \delta=B_0\oplus \{a_1\}\oplus B_1 \oplus \{a_2\} \oplus B_2\oplus \cdots \oplus \{a_{d_t+1}\}\oplus B_{d_t+1}
\]
where $B_i$ is a linear order of a subset of $U_j$ such that $\abs{B_j}=x_j$ for each $j\in [0,d_t+1]$. 
Observe that 
\[  
    \mathsf{DP}'[t,1,\tau, \mathbf{x}]= \mathsf{DP}[t_1,\tau,\mathbf{x}].
\]
For $i\ge 2$, we will recursively compute $\mathsf{DP}'[t,i,\cdot,\cdot]$ by combining the information from $\mathsf{DP}'[t,i-1,\cdot,\cdot]$ and $\mathsf{DP}[t_i,\cdot,\cdot]$.

We now explain how to update the DP tables bottom-up.

\medskip
\noindent
\textbf{Initialization.}
Suppose that $t \in V(T)$ is a leaf node in $T$.
Let $\mathbf{b} = (b_0, \ldots, b_{d_t})$ be a tuple of non-negative integers such that $\sum_{i\in [0,d_t]} b_i = \lvert S_t \rvert = 1$.
Then there is a unique index $h \in [0, d_t]$ such that $b_h=1$.
Thus, for each $\sigma = (a_1, \ldots, a_{d_t}) \in \Sigma(A_t)$, we set $\mathsf{DP}[t, \sigma, \mathbf{b}]$ to the number of $(\sigma', \{t\})$-backedges in $G[\{t\} \cup A_t]$, where 
$\sigma' = (a_1, \ldots, a_h, t, a_{h+1}, \ldots, a_{d_t})$.

\medskip
\noindent
\textbf{Inductive step ($\mathsf{DP}'[t,j,\cdot,\cdot]\to \mathsf{DP}'[t,j+1,\cdot,\cdot]$).}
Let $t$ be an internal node of $T$ with children $t_1, \ldots, t_s$ such that 
for every $1 \leq p < q \leq s$ the edges between $V(T_{t_p})$ and $V(T_{t_q})$ are oriented from $V(T_{t_p})$ to $V(T_{t_q})$. 
Let $j\in [s-1]$. 

Let $\sigma=(a_1,\ldots, a_{d_t+1})\in \Sigma(A_t\cup \{t\})$ be a linear order of $A_t\cup \{t\}$.
Suppose that the table for $j$ is computed.
Let $\mathbf{x}=(x_0,\ldots, x_{d_t+1})$ be a tuple of non-negative integers such that $\sum_{i\in [0,d_t+1]}x_i=\abs{U_{j+1}}$.
To compute the transition for $\mathbf{x}$ at step $j+1$, we consider all valid ways to split $\mathbf{x}$ into a tuple $\mathbf{y}$ representing the distribution of $S_{t_{j+1}}$ and a tuple $\mathbf{x}-\mathbf{y}$ representing the distribution of $U_j$.

Specifically, let $\mathbf{y}=(y_0,\ldots, y_{d_t+1})$ be a tuple of non-negative integers such that $0 \leq y_i\le x_i$ for all $i \in [0,d_t+1]$ and $\sum_{i\in [0,d_t+1]}y_i=\abs{S_{t_{j+1}}}$.
Since $G$ is semi-complete and the children are topologically ordered, every edge between $U_j$ and $S_{t_{j+1}}$ is oriented from $U_j$ to $S_{t_{j+1}}$.
Moreover, if some vertices in $U_j$ and some vertices in $S_{t_{j+1}}$ are contained in the same interval divided by $a_1, \ldots, a_{d_t+1}$, then the vertices in $U_j$ can be ordered before the vertices in $S_{t_{j+1}}$ so that no additional backedges between them are created.
Thus, by defining the penalty function $\mathsf{penalty}(\mathbf{x}, \mathbf{y})$ as
\[
    \mathsf{penalty}(\mathbf{x},\mathbf{y})=\sum_{0\le k<m\le d_t+1}y_k(x_m-y_m),
\]
we update the DP table by
\[
    \DP'[t,j+1, \tau, \mathbf{x}]=\min_{\mathbf{y}}\left(\DP'[t,j,\tau,\mathbf{x}-\mathbf{y}]+\DP[t_{j+1},\tau,\mathbf{y}]+\mathsf{penalty}(\mathbf{x},\mathbf{y})\right),
\]
where the minimum is taken over all tuples $\mathbf{y}$ satisfying the above conditions.

\medskip
\noindent
\textbf{Finalizing the table for $t$ ($\mathsf{DP}'[t,s,\cdot,\cdot]\to \mathsf{DP}[t,\cdot,\cdot]$).}
After computing the tables for $j=s$, we finalize the table for the node $t$.
For each linear order $\sigma=(a_1,\ldots, a_{d_t}) \in \Sigma(A_t)$ and $h \in [0,d_t]$, let
\[
    \sigma_h\coloneqq \begin{cases}
        (t,a_1,\ldots,a_{d_t}),&\text{if $h=0$,}\\
        (a_1,\ldots, a_h,t,a_{h+1},\ldots, a_{d_t}),&\text{if $0<h<d_t$, and}\\
        (a_1,\ldots,a_{d_t},t),&\text{if $h=d_t$.}
    \end{cases}
\]
Additionally, let $c(t,\sigma,h)$ be the number of $(\sigma_h,\{t\})$-backedges, that is,
\[c(t,\sigma,h)=\abs{\{i\in [h]:\text{$(t,a_i)\in E(G)$}\}}+\abs{\{i\in [h+1,d_t]: \text{$(a_i,t)\in E(G)$}\}}.\]
Now, for each tuple $\mathbf{b}=(b_0,\ldots, b_{d_t})$ of non-negative integers with $\sum_{i\in [0,d_t]} b_i=\abs{S_t}$, let $\Pi_{h, \mathbf{b}}$ be the set of all tuples $\mathbf{b}' = (b_0', \ldots, b_{d_t+1}')$ such that 
\begin{itemize}
    \item $b_j' = b_j$ for every $j \in [0, h-1]$, 
    \item $b_h' + b_{h+1}' = b_h-1$, and
    \item $b_{j+1}'=b_j$ for every $j \in [h+1, d_t]$.
\end{itemize}
Then we take
\[
    \DP[t,\sigma,\mathbf{b}]=\min_{h\in [0,d_t]}\left(c(t,\sigma,h)+\min_{\mathbf{b}'\in\Pi_{h,\mathbf{b}}}\DP'[t,s,\sigma_h,\mathbf{b'}]\right).
\]

\medskip
\noindent
\textbf{Correctness.}
To show the correctness of the algorithm, we need to show that the recurrence precisely computes the minimum number of backedges.
We use bottom-up induction on the cycle rank decomposition tree $T$, where for every internal node $t$ with ordered children $t_1,\ldots, t_s$, the induction step processes $t_1,\ldots, t_s$ in this order before processing $t$.

For the base case, we assume that $t$ is a leaf node of $T$.
We note that $S_t=\{t\}$.
By iterating over all possible positions $j\in [0,d_t]$ for $t$ relative to each linear order $\sigma\in \Sigma(A_t)$, the algorithm considers all linear orders of $\{t\}\cup A_t$.
As we assigned, the value $\DP[t,\sigma,\mathbf{b}]$ is the exact number of $(\sigma',\{t\})$-backedges.

For the inductive step, we may assume that $t$ is an internal node of $T$ with children $t_1,\ldots, t_s$, and assume that each $\DP[t_i,\cdot,\cdot]$ is computed by induction.
To prove that the DP table $\DP[t,\sigma,\mathbf{b}]$ correctly computes the minimum number of backedges for the subtree $T_t$, it suffices to show that the sequential merging processes of its children are computed correctly.
This follows from the following claim.

\begin{claim}\label{claim:FAScorrect}
For each internal node $t\in V(T)$ with children $t_1,\ldots, t_s$, each index $j\in [s]$, each linear order $\sigma=(a_1,\ldots, a_{d_t+1})\in \Sigma(A_t\cup\{t\})$, each tuple $\mathbf{x}=(x_0,\ldots, x_{d_t+1})$, the value $\DP'[t,j,\sigma,\mathbf{x}]$ is equal to the minimum number of $(\delta,\bigcup_{i\in [j]}S_{t_i})$-backedges over all linear orders $\delta$ of $A_t\cup \{t\}\cup \left(\bigcup_{i\in [j]}S_{t_i}\right)$ of the form
\[\delta=X_0\oplus \{a_1\}\oplus X_1\oplus \cdots \oplus X_{d_t}\oplus \{a_{d_t+1}\}\oplus X_{d_t+1}\]
such that $\abs{X_k}=x_k$ for each $k\in [0,d_t+1]$.
\end{claim}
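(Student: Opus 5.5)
We argue by induction on $j$. The base case $j=1$ is the identity $\DP'[t,1,\sigma,\mathbf{x}]=\DP[t_1,\sigma,\mathbf{x}]$, which holds because $A_{t_1}=A_t\cup\{t\}$ and $U_1=S_{t_1}$. By the outer induction on $T$, $\DP[t_1,\cdot,\cdot]$ is correct, so both sides are the same minimum over the same family of orders $\delta$.

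For the step $j\to j+1$, we prove two inequalities. Write $W=S_{t_{j+1}}$.

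First, the upper bound. Take an optimal order $\delta_1$ of $U_j\cup A_t\cup\{t\}$ with interval sizes $\mathbf{x}-\mathbf{y}$, and an optimal order $\delta_2$ of $W\cup A_t\cup\{t\}$ with interval sizes $\mathbf{y}$. Merge them into an order $\delta$ of $U_{j+1}\cup A_t\cup\{t\}$ as follows:
\begin{itemize}
\item keep the common skeleton $a_1,\ldots,a_{d_t+1}$;
\item inside each interval $k$, place the $U_j$-part of $\delta_1$ before the $W$-part of $\delta_2$, keeping the internal orders.
\end{itemize}
The $(\delta,U_{j+1})$-backedges split into three kinds.
\begin{enumerate}
\item Backedges with an endpoint in $U_j$ and the other endpoint in $U_j\cup A_t\cup\{t\}$. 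These are exactly the $(\delta_1,U_j)$-backedges, since relative positions are unchanged.
\item Backedges with an endpoint in $W$ and the other endpoint in $W\cup A_t\cup\{t\}$. These are exactly the $(\delta_2,W)$-backedges.
\item Backedges between $U_j$ and $W$. Every such edge goes from $U_j$ to $W$, by the topological ordering of the children and the fact that $t_p,t_q$ are distinct strongly connected components of $G[S_t]-t$. Such an edge is a backedge iff its $W$-endpoint lies in an interval $k$ strictly earlier than the interval $m$ of its $U_j$-endpoint. Same-interval pairs are never backedges, because of the placement inside each interval.
\end{enumerate}
For the third kind we use that $G$ is semi-complete: every pair (one vertex from $U_j$, one from $W$) is adjacent. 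So the number of such backedges is exactly $\sum_{k<m}y_k(x_m-y_m)=\mathsf{penalty}(\mathbf{x},\mathbf{y})$. Minimising over $\mathbf{y}$ shows that the true minimum is at most the recurrence value.

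Second, the lower bound. Take any admissible $\delta$ for $U_{j+1}$ with interval sizes $\mathbf{x}$, and let $\mathbf{y}$ record how many $W$-vertices fall in each interval. Restricting $\delta$ to $U_j\cup A_t\cup\{t\}$ gives an order $\delta_1$ with sizes $\mathbf{x}-\mathbf{y}$. Restricting $\delta$ to $W\cup A_t\cup\{t\}$ gives an order $\delta_2$ with sizes $\mathbf{y}$. The backedges of $\delta$ again partition into the same three kinds:
\begin{itemize}
\item the first kind are exactly the $(\delta_1,U_j)$-backedges, so their number is at least $\DP'[t,j,\sigma,\mathbf{x}-\mathbf{y}]$ by the inner induction;
\item the second kind are exactly the $(\delta_2,W)$-backedges, so their number is at least $\DP[t_{j+1},\sigma,\mathbf{y}]$ by correctness at the child;
\item the third kind contains at least all cross pairs in strictly inverted intervals. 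Their number is exactly $\mathsf{penalty}(\mathbf{x},\mathbf{y})$ by semi-completeness; any same-interval inversions only add further backedges.
\end{itemize}
Hence the cost of $\delta$ is at least the bracketed term in the recurrence for this $\mathbf{y}$, and so at least the minimum over $\mathbf{y}$.

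The main point needing care is that the three kinds of backedges partition the $(\delta,U_{j+1})$-backedges with no double counting. This holds because $U_j$ and $W$ are disjoint. It also uses that $A_t\cup\{t\}$ is shared: edges from $W$ to ancestors are counted only through $\delta_2$, and edges from $U_j$ to ancestors only through $\delta_1$.
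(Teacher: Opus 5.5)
Your proof is correct and follows the same route as the paper: induction on $j$, an upper bound by merging optimal orders for $U_j$ and $S_{t_{j+1}}$ with the $U_j$-part placed first inside each interval, and a lower bound by restricting an arbitrary admissible order and counting cross backedges via semi-completeness and the one-directional edges between children. In the lower bound you count the cross backedges as ``at least'' $\mathsf{penalty}(\mathbf{x},\mathbf{y})$, which is slightly more careful than the paper's ``exactly'': in an arbitrary admissible order, same-interval inversions can only add backedges.
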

\begin{clproof}
For each $j\in [s]$, let $\mathcal{D}_j(\sigma,\mathbf{x})$ be the set of all linear orders $\delta$ of $A_t\cup \{t\}\cup \left(\bigcup_{i\in [j]}S_{t_i}\right)$ which can be decomposed as $\delta=X_0\oplus \{a_1\}\oplus X_1\oplus \cdots \oplus X_{d_t}\oplus \{a_{d_t+1}\}\oplus X_{d_t+1}$ such that $\abs{X_k}=x_k$ for each $k\in [0,d_t+1]$, and let $\mathcal{F}_j(\sigma,\mathbf{x})$ be the set of all linear orders $\delta$ of $A_t\cup \{t\}\cup S_{t_j}$ which can be decomposed as $\delta=X_0\oplus \{a_1\}\oplus X_1\oplus \cdots \oplus X_{d_t}\oplus \{a_{d_t+1}\}\oplus X_{d_t+1}$ such that $\abs{X_k}=x_k$ for each $k\in [0,d_t+1]$.

We use induction on $j$.
The base case $j=1$ holds, since $\DP'[t,1,\sigma,\mathbf{x}]=\DP[t_1,\sigma,\mathbf{x}]$ by definition.

We may assume that $j\ge 2$ and the claim holds for $j-1$.
Let $\mathsf{cost}(\delta)$ be the number of $(\delta,\bigcup_{i\in [j]}S_{t_i})$-backedges.
We claim that $\DP'[t,j,\sigma,\mathbf{x}]=\min_{\delta\in \mathcal{D}_j(\sigma,\mathbf{x})}\mathsf{cost}(\delta)$.

First, we show that $\DP'[t,j,\sigma,\mathbf{x}]\ge \min_{\delta\in \mathcal{D}_j(\sigma,\mathbf{x})}\mathsf{cost}(\delta)$.
Let $\mathbf{y}$ be the tuple achieving the minimum in the recurrence for $\DP'[t,j,\sigma,\mathbf{x}]$.
By induction, there is a linear order $\delta_1\in \mathcal{D}_{j-1}(\sigma,\mathbf{x}-\mathbf{y})$ with blocks $Z_k$ where $\abs{Z_k}=x_k-y_k$ for each $k$ and a linear order $\delta_2\in \mathcal{F}_j(\sigma,\mathbf{y})$ with blocks $Y_k$ where $\abs{Y_k}=y_k$ for each $k$, such that
\[\mathsf{cost}(\delta_1)=\DP'[t,j-1,\sigma,\mathbf{x}-\mathbf{y}]\quad\text{ and }\quad\mathsf{cost}(\delta_2)=\DP[t_j,\sigma,\mathbf{y}].\]
We construct a linear order $\delta^\ast$ by defining its $k$-th block as $X_k=Z_k\oplus Y_k$ for each $k\in [0,d_t+1]$.
This places all vertices of $\bigcup_{i\in [j]}S_{t_i}$.
Since $G$ is semi-complete and the children are topologically ordered, all edges between $Z_k$ and $Y_k$ are directed from $Z_k$ to $Y_k$.
Thus, each block of $\delta^\ast$ has no backedges between $Y_k$ and $Z_k$.
The only new backedges are edges where a vertex of $Y_k$ appears before a vertex of $Z_m$ with $k<m$.
The number of all such edges is $\sum_{0\le k<m\le d_t+1}\abs{Y_k}\abs{Z_m}=\mathsf{penalty}(\mathbf{x},\mathbf{y})$.
Since $\delta^\ast\in\mathcal{D}_j(\sigma,\mathbf{x})$, we have $\min_{\delta\in\mathcal{D}_j(\sigma,\mathbf{x})}\mathsf{cost}(\delta)\le \mathsf{cost}(\delta^\ast)=\DP'[t,j,\sigma,\mathbf{x}]$.

Now, we show that $\DP'[t,j,\sigma,\mathbf{x}]\le \min_{\delta\in \mathcal{D}_j(\sigma,\mathbf{x})}\mathsf{cost}(\delta)$.
Let $\delta'\in\mathcal{D}_j(\sigma,\mathbf{x})$ be an optimal linear order with blocks $X_k$, which minimizes the number of backedges.
For a sequence $S$ and a subset of vertices $W$, let $S\vert_W$ be the restriction of $S$ to $W$.
For each $k\in [0,d_t+1]$, let $Y'_k \coloneqq X_k\vert_{S_{t_j}}$ and $Z'_k \coloneqq X_k\vert_{\bigcup_{i\in [j-1]}S_{t_i}}$.
Let $y'_k\coloneqq \abs{Y'_k}$ and define the tuple $\mathbf{y}'=(y'_0,\ldots, y'_{d_t+1})$.
Then we have $\abs{Z'_k}=x_k-y'_k$ for each $k$.
Restricting $\delta'$ to $A_t\cup \{t\}\cup \left(\bigcup_{i\in [j-1]}S_{t_i}\right)$ constructs a linear order $\delta'_1\in \mathcal{D}_{j-1}(\sigma, \mathbf{x}-\mathbf{y}')$.
Similarly, restricting $\delta'$ to $A_t\cup \{t\}\cup S_{t_j}$ constructs a linear order $\delta'_2\in \mathcal{F}_{j}(\sigma, \mathbf{y}')$.
By induction, we have $\mathsf{cost}(\delta'_1)\ge \DP'[t,j-1,\sigma,\mathbf{x}-\mathbf{y}']$ and $\mathsf{cost}(\delta'_2)\ge \DP[t_j,\sigma,\mathbf{y}']$.
Moreover, the number of $\delta'$-backedges between $S_{t_j}$ and $\bigcup_{i \in [j-1]} S_{t_i}$ is exactly $\sum_{0\le k<m\le d_t+1}y'_k(x_m-y'_m)=\mathsf{penalty}(\mathbf{x},\mathbf{y}')$.
Hence, we have
\begin{align*}
    \mathsf{cost}(\delta')&\ge \mathsf{cost}(\delta'_1)+\mathsf{cost}(\delta'_2)+\mathsf{penalty}(\mathbf{x},\mathbf{y}')\\
    &\ge \DP'[t,j-1,\sigma,\mathbf{x}-\mathbf{y}']+\DP[t_j,\sigma,\mathbf{y}']+\mathsf{penalty}(\mathbf{x},\mathbf{y}')\\
    &\ge \min_{\mathbf{y''}}\left(\DP'[t,j-1,\sigma,\mathbf{x}-\mathbf{y}'']+\DP[t_j,\sigma,\mathbf{y}'']+\mathsf{penalty}(\mathbf{x},\mathbf{y}'')\right)\\
    &=\DP'[t,j,\sigma,\mathbf{x}].
\end{align*}
This proves the claim.
\end{clproof}

By \cref{claim:FAScorrect}, the intermediate DP table $\DP'[t,s,\sigma_h,\mathbf{b}']$ correctly computes the minimum number of backedges induced by the union of $S_t\setminus \{t\}$ and $A_t\cup\{t\}$.
In the finalization step for the internal node $t$, the algorithm iterates over all possible positions $h\in [0,d_t]$ for $t$ relative to a linear order $\sigma\in\Sigma(A_t)$.
The term $c(t,\sigma,h)$ correctly counts all the backedges between $t$ and its proper ancestors.
The set $\Pi_{h,\mathbf{b}}$ correctly collects every tuple $\mathbf{b}'$ of the children which account for the specific placement of $t$ at position $h$ such that they project down to the tuple $\mathbf{b}$.
Taking the minimum over all choices of $h$ guarantees that $\DP[t,\sigma,\mathbf{b}]$ records the minimum number of backedges for $S_t$.

By induction on $T$, the DP tables for all nodes store the minimum number of backedges correctly.
At the root node $r$, the ancestor set is empty.
This implies $d_r=0$ and $S_r=V(G)$.
Hence, the single entry $\DP[r,\varnothing, (\abs{V(G)})]$ computes the minimum size of a feedback arc set of $G$.

\medskip
\noindent
\textbf{Time complexity.}
Since the depth of $(T, r)$ is $w$, we have $d_t \leq w$, $\abs{\Sigma(A_t)} \leq w!$, and $\abs{\Sigma(A_t \cup \{t\})} \leq (w+1)!$ for every node $t$ in $T$.
Furthermore, the number of tuples consisting of at most $w+2$ non-negative integers that sum to at most $n$ is $n^{\mathcal{O}(w)}$.
Thus, the number of states over the whole algorithm is~$n^{\mathcal{O}(w)}$.

For a leaf node, the table $\DP$ can be filled directly in
time $n^{\mathcal{O}(w)}$.
For an internal node $t$, it has at most $n$ children, and hence there are at most $n$ transitions for $t$ to compute the table $\DP'$.
In each state of such a transition, we enumerate at most $n^{\mathcal{O}(w)}$ choices of $\mathbf{y}$, and for each such $\mathbf{y}$, the value of $\mathsf{penalty}(\mathbf{x}, \mathbf{y})$ can be computed in time $\mathcal{O}(w^2)$.
Thus, it takes time $n^{\mathcal{O}(w)}$ to compute all tables $\DP'[t, \cdot, \cdot, \cdot]$.
During the finalization step, each state enumerates at most $w+1$ choices of $h$, and for each $h$, there are at most $n+1$ choices for the choice of the tuple $\mathbf{b}'$.
This shows that the finalization step also runs in time $n^{\mathcal{O}(w)}$.
Consequently, the dynamic programming runs in time~$n^{\mathcal{O}(w)}$.
\end{proof}

\section{Conclusion}\label{sec:conclusion}

In this paper, we proved that \textsc{Cycle Rank} is fixed-parameter tractable parameterized by $w$ and directed clique width.
To achieve this, we introduced the notion of CW-rankings, which is equivalent to the cycle rank and allow us to track the reachability conditions of directed closed walks.
By applying this result, we further provided a fixed-parameter tractable algorithm parameterized solely by $w$ for the class of semi-complete digraphs.

While \cref{thm:main} establishes fixed-parameter tractability, its running time relies on a double-exponential dependence on the parameters $w$ and $k$.
This naturally raises the question of whether the parameter dependence of our algorithm in \cref{thm:main} can be improved.
\begin{question}
Does there exist an algorithm for \textsc{Cycle Rank} on digraphs of directed clique-width $k$ that runs in time $2^{f(k,w)}\cdot n^{\mathcal{O}(1)}$ for some polynomial function $f$?
\end{question}

It is well-known that every fixed-parameter tractable problem admits a kernel.
Thus,~\cref{thm:main1} guarantees a kernel of arbitrary size.
This leads to the following question.
\begin{question}
Does \textsc{Cycle Rank} parameterized by $w$ admit a polynomial kernel when the input digraph is semi-complete?
\end{question}

We show that \textsc{Minimum Feedback Arc Set} on semi-complete digraphs is slicewise polynomial parameterized by the cycle rank.
A natural next step is to ask whether the parameter dependence can be improved to fixed-parameter tractability. By \Cref{thm:main1}, we may assume that the cycle rank decomposition is given as input.
\begin{question}
Is \textsc{Minimum Feedback Arc Set} on semi-complete digraphs fixed-parameter tractable parameterized by the cycle rank of the input digraph?
\end{question}

Finally, a major open question, originally posed by Gruber~\cite{Gruber2012} and by Giannopoulou, Hunter, and Thilikos~\cite{GIANNOPOULOU2012searchinggame}, remains unsolved.
\begin{question}
Is \textsc{Cycle Rank} fixed-parameter tractable parameterized solely by $w$ on general digraphs?
\end{question}

\bibliographystyle{plain}
\bibliography{tournament_cyclerank}

\end{document}